\renewcommand\baselinestretch{1.}
\newcommand {\ctn}{\citet} 
\newcommand {\ctp}{\citep}       
\numberwithin{equation}{section}
\theoremstyle{plain}
\newtheorem{theorem}{Theorem}[section]
\newtheorem{definition}{Definition}[section]
\newtheorem{remark}{Remark}[section]
\newtheorem{proposition}{Proposition}[section]
\newcommand{\bSigma}{\boldsymbol{\Sigma}}
\newcommand{\bPsi}{\boldsymbol{\Psi}}
\newcommand{\bmu}{\boldsymbol{\mu}}
\newcommand{\bOmega}{\boldsymbol{\Omega}}
\newcommand{\bV}{\boldsymbol{V}}
\newcommand{\bR}{\boldsymbol{R}}
\newcommand{\bx}{\boldsymbol{x}}
\newcommand{\bX}{\boldsymbol{X}}
\newcommand{\bZ}{\boldsymbol{Z}}
\newcommand{\bz}{\boldsymbol{z}}
\newcommand{\bL}{\boldsymbol{L}}
\newcommand {\bN}{\mathcal{N}}
\begin{document}
\renewcommand\baselinestretch{1.}
\begin{frontmatter}
\title{Soft Random Graphs in Probabilistic Metric Spaces $\&$ Inter-graph Distance}
\runtitle{SRGGs in Probabilistic Spaces $\&$ Inter-graph Distance}

\begin{aug}
\author{
{\fnms{Kangrui} \snm{Wang}\thanksref{t2}\ead[label=e2]{kwang@turing.ac.uk}}
\and
{\fnms{Dalia}
  \snm{Chakrabarty}\thanksref{t1,m1}\ead[label=e1]{dalia.chakrabarty@brunel.ac.uk}}
},
\thankstext{t2}{Alan Turing Institute} 
\thankstext{t1}{Department of Mathematics,
  Brunel University London}

\runauthor{Wang $\&$ Chakrabarty}

\affiliation{Alan Turing Institute, Brunel University London}


\address{\thanksmark{m1} Corresponding author\\
Department of Mathematics\\
Brunel University London\\
Uxbridge\\
Middlesex UB8 3PH\\
U.K.\\
\printead*{e1}
}

\end{aug}
\begin{abstract} 
{
We present a new method for learning Soft Random Geometric Graphs (SRGGs),
drawn in probabilistic metric spaces, with the connection function of the
graph defined as the marginal posterior probability of an edge random
variable, given the correlation between the nodes connected by that edge. In
fact, this inter-node
correlation matrix is itself a random variable in our learning strategy, and we
learn this by identifying each node as a random
variable, measurements of which comprise a column of a given multivariate
dataset. We undertake inference with Metropolis with a 2-block update scheme. The SRGG is shown to be generated by a non-homogeneous
Poisson point process, the intensity of which is location-dependent. Given the
multivariate dataset, likelihood of the inter-column correlation matrix is attained
following achievement of a closed-form marginalisation over all inter-row
correlation matrices. Distance between a pair of graphical models learnt 
given the respective datasets, offers the absolute correlation between the
given datasets; such inter-graph distance computation is our ultimate objective, and is
achieved using a newly introduced metric that resembles an
uncertainty-normalised Hellinger distance between posterior probabilities of
the two learnt SRGGs, given the respective datasets. Two sets of 
empirical illustrations on real data are undertaken, and application to
simulated data is included to exemplify the effect of incorporating
measurement noise in the learning of a graphical model.
}
\end{abstract}

\begin{keyword}[class=AMS]
\kwd{Graphical methods }
\kwd{62xx; Random graphs }
\kwd{05C80; Measures of association (correlation, canonical corelation, etc) }
\kwd{62H20; Distance in graphs }
\kwd{05C12}
\end{keyword}
\begin{keyword}
\kwd{Soft random geometric graphs}
\kwd{Probabilistic metric spaces}
\kwd{Inter-graph distance}
\kwd{Hellinger distance}
\kwd{Metropolis by block update}
\kwd{Human disease-symptom network}
\end{keyword}
\end{frontmatter}


\renewcommand\baselinestretch{1.5}
{
\section{Introduction}
\label{sec:intro}
\noindent
Graphical models of complex multivariate datasets, manifest intuitive
illustrations of the correlation structures of the data, and are of interest
in different disciplines \ctp{whittaker, large_net, plos2007, carvalhowest,
  dipankar}. Much work is present in the Statistics literature on the
intrinsic correlation structure of a multivariate dataset that comprises
multiple measurements of a vector-valued observable, where it is common
practice to model the joint probability distribution of a set of such
observable values, as matrix-Normal \ctp{multigraph, west_2016, wangwest}.

In this paper, we present a method for learning the correlation structure of a
multivariate dataset, and its graphical model. The general method is advanced
for iterative Bayesian learning of the correlation matrix, at each update of
which, a soft random geometric graph (SRGG)
\citep{penrose,penrose_2016,Giles2016} of the data is updated, where any such
SRGG is drawn in probabilistic metric space, \citep{menger, sklar}, such that
its connection function is the location-dependent marginal posterior
probability of an edge, given the correlation between the nodes that straddle
this edge, and the chosen cutoff radius is a probability in the probabilistic
metric space that the SRGG is drawn in. In fact, such an SRGG is shown to be
underlined by a non-homogeneous Poisson process with a an intensity that is
dependent on the location, or more precisely, the nodes, and thereby on the
inter-nodal correlation. Thus, the point process that generates this SRGG is
compounded with the process that generates the matrix-valued correlation
variable. The full graphical model of the data is then defined using the
sequence of SRGGs generated across iterations of this learning scheme. In this
method, inference on uncertainties of both the correlation matrix and
graphical model is undertaken, and we can acknowledge measurement errors in
learning both random structures as well.

The learning of the graph and correlation structure are undertaken Bayesianly,
using Bayesian inference that is implemented via Markov Chain Monte Carlo
(MCMC) inference techniques; to be precise, a
Metropolis-with-2-block-update-based algorithm is implemented \ctp{Robert04},
to make inference on the correlation matrix given the data, and on the SRGG
given the updated correlation.  However, for learning large networks, for
which such iterative inference is not practical, we modulate the method to
accommodate this concern. Then we undertake the network learning as a single
SRGG.

The ultimate aim behind our learning of the graphical model of a given data,
is to compute the distance between the graphical models learnt for a pair of
given datasets, in order to thereby compute the strength of the inter-data
correlation. We compute the distance between the graphical models learnt for
the respective dataset, using a new metric $\delta$ that we introduce (in
Theorem~\ref{theo:distmet}), where this metric is akin to the Hellinger
distance between the posterior probabilities of the graphical models given the
respective correlation structure of the given datasets, normalised by the
uncertainties in each of the learnt graphical models. Such distance informs on
the absolute correlation between the pair of multivariate datasets, for which
the graphical models are learnt.

Objective and comprehensive uncertainties on the Bayesianly learnt graphical
model of given multivariate data, are sparsely available in the
literature. Such uncertainties can potentially be very useful in informing on
the range of models that describe the partial correlation structure of
the data at hand. \ctn{madiganraftery} discuss a method for computing model
uncertainties by averaging over a set of identified models, and they advance
ways for the computation of the posterior model probabilities, by taking
advantage of the graphical structure, for two classes of considered models,
namely, the recursive causal models \ctp{carlincausal} and the decomposable
log-linear models \ctp{goodman}. This method allows them to select the ``best
models'', while accounting for model uncertainty. Our method on the other
hand, provides a direct and well-defined way of learning uncertainties of the
graphical model of a given multivariate data. 

However, we wish to extend such learning to higher-dimensional
data, for example, to a dataset that is cuboidally-shaped, given that it
comprises multiple measurements of a matrix-valued observable.  \ctn{hoff,
  xu2012}; Wang $\&$ Chakrabarty ({\url{https://arxiv.org/abs/1803.04582}}), advance methods to
learn the correlation in high-dimensional data in general. For a
rectangularly-shaped multivariate dataset, the pioneering work by
\ctn{wangwest} allows for the learning of both the inter-row and
inter-column covariance matrices, and therefore, of two graphical
models. \ctn{multigraph} extend this approach to high-dimensional data.
However, a high-dimensional graph showing the correlation structure amongst
the multiple components of a general hypercuboidally-shaped dataset, is not
easy to visualise or interpret. Instead, in this paper, we treat the
high-dimensional data as built of correlated rectangularly-shaped data slices,
given each of which, the inter-column (partial) correlation structure and
graphical model are Bayesianly learnt, along with uncertainties, subsequent to
our closed-form marginalisation over all inter-row correlation matrices
(in Section~\ref{sec:corr}, unlike in the work of \ctn{wangwest}). By invoking the uncertainties learnt
in the graphical models, we advance a new inter-graph distance
metric (Section~\ref{sec:hell}), based on the Hellinger distance \ctp{matu,juq} between the posterior probability densities of the pair of
graphical models that are learnt given the respective pair of such
rectangularly-shaped data slices. We use a corresponding affinity measure to
then infer on the correlation between the pair of datasets
(Section~\ref{sec:itisdist}), permitting the correlation structure of the
high-dimensional dataset thereby. Thus, by computing the
pairwise inter-graph distance between each
learnt pair of graphs, we can avoid the inadequacy of trying to capture
spatial correlations amongst sets of multivariate observations, by ``computing
partial correlation coefficients and by specifying and fitting more complex
graphical models'', as was noted by \ctn{guiness}. An additional advantage is
that our method offers
the inter-graph distance for two differently sized datasets. 

That we do not learn the graphical model as an Erdos-Renyi graph is because we
wish to utilise the fully Bayesian nature of the inference that we implement
here, without resorting to any unrealistic assumptions -- such as
decomposability. Effectively, we wish to learn the uncertainty-included
graphical model of noisy data, as distinguished from making inference on the
graph (i.e. writing its posterior) clique-by-clique. Also, we do not need to
be reliant on the closed-form nature of the posteriors to sample from,
i.e. we do not need to invoke conjugacy to affect our learning. Indeed, to
contextualise to a common practice in Bayesian learning of undirected graphs, a
Hyper-Inverse-Wishart prior is typically imposed on the covariance matrix of
the data, as this then allows for a Hyper-Inverse-Wishart posterior of the
covariance, which in turn implies that the marginal posterior of of any clique
is Inverse-Wishart -- a known, closed-form density \ctp{dawidlauritzen93,
  lauritzen96}.
Inference is then rendered easier, than when generating posterior samples from a
non-closed form posterior, (using techniques such as MCMC). Now, if the graph
is not decomposable, and a Hyper-Inverse-Wishart prior is placed on the
covariance matrix, the resulting Hyper-Inverse-Wishart joint posterior density
that can be factorised into a set of Inverse-Wishart densities, cannot be
identified as the clique marginals. Expressed differently, the clique
marginals are not closed-form when the graph is not decomposable. However,
this is not a worry in our learning as 
we can undertake our learning irrespective of the validity of
decomposability. Our pursued graphical model is a soft
RGG that is drawn in a (normed) probabilistic space, where
the location-dependent affinity measure between a pair of vertices of the
graph is conditional on the correlation learnt between the pair of random
variables at these two vertices.

This paper is organised as follows. The following section deliberates upon the
methodology development that we advance towards the learning of the SRGG and
the inter-column correlation matrix of a given dataset. In the subsequent
section, issues relevant to the inference on the unknowns is discussed, along
with definition of uncertainties in the learnt graphical model. The metric
used for computing the inter-graph distance, is then discussed in
Section~\ref{sec:hell}, while Section~\ref{sec:net} presents our modulated
learning methodology to accommodate challenges of learning large networks as
SRGGs. Section~\ref{sec:real} presents the empirical illustration on 2 real
datasets and comparison against existing results, while distance computation
between the learnt graphical models of these 2 data, is discussed in
Section~\ref{sec:real_hell}. In Section~\ref{sec:disease}, we learn the large
disease-phenotype network, and compare our results with those reported earlier
\ctp{hsg}. The paper is rounded up with a section that summarises the main
findings and the conclusions.

The attached Supplementary Material elaborates on certain aspects of our
work. This includes quantitative comparison of the results that we obtain
using the real datasets that we illustrate our methodology upon (Sections~7
and 9 of the Supplementary Material, along with outputs of such a comparative
exercise included in Figures~12, 13, 14 of the Supplement), with results that
are available in the literature, or obtained independently by us. Importantly,
detailed model checking is discussed in Section~5 of the Supplementary
Material, in the context of an empirical illustration made to a simulated
dataset (presented in Section~4 of the Supplement). Convergence signatures of
our MCMC chains are borne by the extra results that are included in Figures~9,
10, and 11 of the Supplement, in addition to discussions below in
Section~\ref{sec:real}.



\section{Background}
\label{sec:rgg}
\noindent
Let points $X_1,\ldots,X_p$ be independent, with the random variable $X_i\in{\cal X}\subseteq{\mathbb R}^d$ be s.t. $X_i\sim
f(\theta_1^{(i)},\ldots,\theta_q^{(i)})$, with the parameters of the
{\it{pdf}} of $X_i$ given as $\theta_1^{(i)},\ldots,\theta_q^{(i)}\in{\mathbb
  R}$. 

The $d$-dimensional, soft random geometric graph (SRGG) ${\cal
  G}_\phi(\bV)$ on the vertex set $\bV:=\{X_1,\ldots,X_p\}$, with
each of the $p$ points $X_1,\ldots,X_p\in{\cal X}\subseteq{\mathbb R}^d$
assigned a random coordinate in the box [0,1]$^d$, is 
s.t. probability of edge $G_{ij}$ between the $i$-th and $j$-th
nodes ($i\neq j; i,j\in\{1,\ldots,p\}$), is given by a function $\phi(\cdot)$
of the distance between point $X_i$ and point $X_j$. Here $\phi:{\cal X}\longrightarrow[0,1]$ is
referred to as the connection function, following \ctn{penrose_2016}.

\subsection{Probabilistic metric space: background}
\noindent
We draw our SRGG in a probabilistic metric space ${\cal X}$ \ctp{menger},
s.t. to any $X_1, X_2 \in{\cal X}$, a probability distribution
$F_{X_1,X_2}(X)$ is assigned, where $F_{X_1,X_2}(0)=0$ (similar to the
assignment of a non-negative number to any two points in a metric space). Let
all distributions that abide by the constraint $F_{\cdot,\cdot}(0)=0$, live in
space ${\cal F}_{+}\subset{\cal F}$, where probability distributions live in
${\cal F}$. We note that for $X_1=X_2$, the distribution
$F_{X_1,X_2}(\cdot)=\ell_0(\cdot)$, where the distribution $\ell_a(\cdot)\in{\cal F}_{+}$ is
s.t. $\ell_a(x) = 0 \text{ if } x \leq 0$ and $\ell_a(x) = 1 \text{ if } x >
0$, for $a\in{\mathbb R}_{\geq 0}$.

\begin{definition}
A probabilistic metric space is the triple
$$\{{\cal X}, F, \Delta\},$$
where
the probabilistic distance is $F_{X_1,X_2}:{\cal
  X}\times{\cal X}\longrightarrow{\cal F}_{+}$, $\forall X_1, X_2\in{\cal
  X}$, with $F_{X_1,X_2}(\cdot)=\ell_{0}(\cdot) \iff X_1=X_2$; \\
$F_{X_1,X_2}(\cdot) = F_{X_2,X_1}(\cdot)$, $\forall X_1, X_2\in{\cal
  X}$; and\\
triangle function $\Delta$ is s.t. $F_{X_1,X_3}(\cdot) \geq
\Delta(F_{X_1,X_2}, F_{X_2,X_3})(\cdot)$, $\forall X_1, X_2, X_3\in{\cal
  X}$,\\
with the triangle function $\Delta$ defined as a binary operation on ${\cal
  F}_+$, with respect to the triangle norm $D$ s.t. 
$$\Delta_D(F_{X_1,X_2}, F_{X_3,X_4})(x) = \sup[D(F_{X_1,X_2}(u),F_{X_3,X_4}(v)); u+v=x],$$
where the triangle norm $D$ \ctp{fodor} is defined as the binary operation on interval
[0,1], s.t. $D(y_1,y_2)=D(y_2,y_1)\forall y_1,y_2 \in [0,1]$; $D(y_1,y_2) \geq
D(y_3,y_4)\forall y_1\geq y_3; y_2\geq y_4; y_1,y_2,y_3,y_4 \in [0,1]$;
$D(y_1,D(y_2,y_3))=D(D(y_1,y_2),y_3)$ \\$\forall y_1,y_2,y_3\in[0,1]$; and $D(y,1)=y, \forall y \in [0,1]$, 
\end{definition}

\subsection{Marginal posterior of edge parameter as connection function}
\noindent
We draw our SRGG in a probabilistic metric space ${\cal X}$.

\begin{definition}
Points $X_1,\ldots,X_p$ are assigned random coordinates in
a 2-dimensional square, to construct a 2-dimensional SRGG on the vertex set
$\bV=\{X_1,\ldots,X_p\}$, with the probability of the edge between $X_i$ and
$X_j$ (where $X_i\neq X_j; X_i,X_j\in\bV$) given by the marginal posterior
probability of the random edge variable $G_{ij}$, conditional on the (partial)
correlation $\rho_{ij}$ between the random variables $X_i$ and $X_i$,
i.e.the connection function in our SRGG is the marginal posterior of
$G_{ij}$, given $\rho_{ij}$: 
$$m(G_{ij}\vert \rho_{ij}).$$ 
\end{definition}

\begin{remark}
  In our SRGG, the connection function is the affinity measure between a pair
  of points, defined by the marginal posterior of the edge variable connecting
  them in a (normed) probabilistic metric space \ctp{menger}. This affinity
  measure is complementary to the distance between this pair of points in this
  space.
\end{remark}

An immediate definition of the posterior probability density of our SRGG
would be the joint posterior probability of
the edge parameters ($\{G_{ij}\}_{i\neq j; i,j=1}^p$) given the partial
correlation structure, as:
$$\pi(G_{11}, G_{12},\ldots G_{p\:p-1}\vert \bR) \propto \ell(G_{11}, G_{12},\ldots G_{p\:p-1}\vert \bR)\:\pi_0(G_{11}, G_{12},\ldots G_{p\:p-1}),$$
where $\pi_0(G_{11}, G_{12},\ldots G_{p\:p-1})$ is the prior probability
density on the edge parameters $\{G_{ij}\}_{i\neq j; i,j=1}^{p}$, and $\ell(G_{12},\ldots, G_{1p},G_{23},\ldots,G_{2p},G_{34},\ldots,G_{p\:p-1}\vert
\bR)$ is the likelihood of the edge parameters, given the partial correlation
matrix $\bR=[\rho_{ij}]$. 

We choose a prior on $G_{ij}$ that is $Bernoulli(0.5)$ $\forall i,j$,
i.e. $$\pi_0(G_{11}, G_{12},\ldots G_{p\:p-1})
=\displaystyle{\prod\limits_{i,j=1; i\neq j}^p 0.5^{G_{ij}} 0.5^{1-G_{ij}}
};$$ thus, the prior is independent of the edge parameters. In applications
marked by more information, we can resort to stronger priors. However, as we
will soon see, this posterior definition needs updating.

We choose to define this likelihood as a function of the (squared) Euclidean
distance between the ``observation'', (i.e. the absolute value of $\rho_{ij}$),
and the unknown parameter $G_{ij}$, with the squared distance normalised by a
squared scale length, i.e. the variance parameter $\upsilon_{ij}$, for all
relevant $ij$-pairs. Thus, the unknown parameters in the model are the edge
parameters $\{G_{ij}\}_{i\neq j; i,j=1}^{p}$ and variance parameters
$\{\upsilon_{ij}\}_{i\neq j; i,j=1}^{p}$. In light of these newly introduced
variance parameters, we rewrite our likelihood (of the unknown edge and
variance parameters, given the partial correlation matrix),as:
{{$$\ell\left(G_{12},\ldots,G_{1p},G_{23},\ldots,G_{2p},\ldots,G_{p\:p-1},\upsilon_{12},\ldots,\upsilon_{1p},\upsilon_{23},\ldots,\upsilon_{2p},\ldots,\upsilon_{p\:p-1}\vert\bR\vert\right).$$}}
Now, we choose a parametric form of the likelihood, given its anticipated
properties. Likelihood increases (decreases) as distance between $\vert\rho_{ij}\vert$ and
$G_{ij}$ decreases (increases). Also the likelihood is invariant to change of
sign of $\vert\rho_{ij}\vert-G_{ij}$. Given this, we model our likelihood of the
edge and variance parameters, given $\bR$ as
$$\ell\left(G_{12},\ldots,G_{1p},G_{23},\ldots,G_{2p},\ldots,G_{p\:p-1},\upsilon_{12},\ldots,\upsilon_{1p},\upsilon_{23},\ldots,\upsilon_{2p},\ldots,\upsilon_{p\:p-1}\vert\bR\right) $$
\begin{equation}
\displaystyle{
=\prod\limits_{i\neq j; i,j=1}^{p} \frac{1}{\sqrt{2\pi\upsilon_{ij}}}
\exp\left[-\frac{\left(G_{ij} - \vert \rho_{ij}\vert\right)^2}{2\upsilon_{ij}} 
\right]}, 
\label{eqn:folded}
\end{equation}
where the variance parameters $\{\upsilon_{ij}\}_{i\neq j;i,j=1}^p$ are
indeed hyper-parameters that are also learnt from the data. The variance
parameters are assigned uniform prior probability in the interval $[0,1]$. 

Thus, the joint posterior probability of the edge and variance parameters is 
\begin{equation}
\pi(\{G_{ij}\}_{i\neq j; i,j=1}^p, \{\upsilon_{ij}\}_{i\neq j; i,j=1}^p \vert
\bR) = 
K\displaystyle{
\prod\limits_{i\neq j; i,j=1}^{p} \frac{1}{\sqrt{2\pi\upsilon_{ij}}}
\exp\left[-\frac{\left(G_{ij} - \vert \rho_{ij}\vert\right)^2}{2\upsilon_{ij}} 
\right]}, 
\label{eqn:folded_post}
\end{equation}
where $K>0$ is a constant that incorporates information on the $ij$-independent
Bernoulli(0.5) prior, Uniform[0,1] prior, and probability of $\bR$ that
defines the denominator of Bayes rule.

Then the marginal posterior probability of the $ij$-th edge parameter
$G_{ij}$, given the partial correlation $\rho_{ij}$ between $X_i$ and $X_j$,
is: 
\begin{eqnarray}
m(G_{ij}\vert \rho_{ij}) &=& K \displaystyle{\int\limits_{0}^1 \frac{1}{\sqrt{2\pi\upsilon}}
\exp\left[-\frac{\left(G_{ij} - \vert \rho_{ij}\vert\right)^2}{2\upsilon} 
\right] d\upsilon} \nonumber \\
&=& K \displaystyle{\left[
    \sqrt{\frac{2}{\pi}}\exp\left(\frac{-\left(G_{ij} - \vert
          \rho_{ij}\vert\right)^2}{2}\right) - 
\left(\vert G_{ij} - \vert \rho_{ij}\vert\vert\right) 
{\mbox{erfc}}\left(\frac{\vert G_{ij} - \vert
    \rho_{ij}\vert\vert}{\sqrt{2}}\right)\right]},\nonumber \\
&&
\label{eqn:marg_final}
\end{eqnarray}
where ${\mbox{erfc}}(\cdot)$ is the complimentary error function. 

\begin{theorem}
Setting the global constant $K=1$, separation $D_{ij}$ between
points $X_i\in{\cal X}$ and $X_j\in{\cal X}$ is a norm in the probabilistic
metric space ${\cal X}$, where
\begin{eqnarray}
D_{ij} &:=& m(G_{ij}\vert \rho_{ij}) + \vert G_{ij} -
\vert\rho_{ij}\vert\vert,\\ \nonumber
&=& \displaystyle{\left[
    \sqrt{\frac{2}{\pi}}\exp\left(\frac{-\left(G_{ij} - \vert
          \rho_{ij}\vert\right)^2}{2}\right) + 
\left(\vert G_{ij} - \vert \rho_{ij}\vert\vert\right) 
{\mbox{erf}}\left(\frac{\vert G_{ij} - \vert \rho_{ij}\vert\vert}{\sqrt{2}}\right)\right]},\\ \nonumber
\label{eqn:dist_aff}
\end{eqnarray}
where we have recalled that the error function ${\mbox{erf}}(\cdot) = 1 - {\mbox{erfc}}(\cdot)$,
and $m(G_{ij}\vert \rho_{ij})$ is defined in Equation~\ref{eqn:marg_final}.
\end{theorem}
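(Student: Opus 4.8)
The plan is to prove the statement in two stages: first reduce the defining sum $m(G_{ij}\vert\rho_{ij})+\vert G_{ij}-\vert\rho_{ij}\vert\vert$ to the closed form in the second displayed line, and then verify that this closed form satisfies the properties required of a separation/norm on $\mathcal{X}$, namely non-negativity, symmetry, and a triangle-type inequality. The algebraic reduction is routine: setting $K=1$ in Equation~\ref{eqn:marg_final} and adding $\vert G_{ij}-\vert\rho_{ij}\vert\vert$, the term $-\vert G_{ij}-\vert\rho_{ij}\vert\vert\,\mbox{erfc}(\cdot)$ inherited from the marginal posterior combines with the bare $+\vert G_{ij}-\vert\rho_{ij}\vert\vert$ to give $\vert G_{ij}-\vert\rho_{ij}\vert\vert\,(1-\mbox{erfc}(\cdot))$; invoking $\mbox{erf}=1-\mbox{erfc}$ collapses this to $\vert G_{ij}-\vert\rho_{ij}\vert\vert\,\mbox{erf}(\cdot)$, which is exactly the claimed expression.

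Next I would observe that $D_{ij}$ depends on $(G_{ij},\rho_{ij})$ only through the single scalar $t:=\vert G_{ij}-\vert\rho_{ij}\vert\vert\ge 0$, since the Gaussian factor carries $(G_{ij}-\vert\rho_{ij}\vert)^2=t^2$ while the remaining factor carries $\vert G_{ij}-\vert\rho_{ij}\vert\vert=t$. Writing $D_{ij}=g(t)$ with $g(t):=\sqrt{2/\pi}\,e^{-t^2/2}+t\,\mbox{erf}(t/\sqrt2)$, non-negativity is immediate because each summand is non-negative, and symmetry $D_{ij}=D_{ji}$ follows at once because $\bR$ is symmetric and the graph is undirected, so $t$ is invariant under $i\leftrightarrow j$.

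The substantive work lies in the triangle inequality, and its computational engine is the derivative identity $g'(t)=\mbox{erf}(t/\sqrt2)$: on differentiating, the two cross terms $\mp\sqrt{2/\pi}\,t\,e^{-t^2/2}$ cancel exactly, leaving $g'(t)=\mbox{erf}(t/\sqrt2)\in[0,1)$. This single identity yields everything needed, namely that $g$ is non-decreasing and $1$-Lipschitz, and that $g(t)\ge t$ for all $t\ge0$, the latter because $g(t)-t=g(0)-\int_0^t\mbox{erfc}(s/\sqrt2)\,ds$ with $\int_0^\infty\mbox{erfc}(s/\sqrt2)\,ds=\sqrt{2/\pi}=g(0)$. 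Combining the Lipschitz bound with $g(y)\ge y$ then gives sub-additivity, $g(x+y)-g(x)=\int_0^y g'(x+u)\,du\le y\le g(y)$, hence $g(x+y)\le g(x)+g(y)$ for all $x,y\ge0$; together with monotonicity of $g$ and the elementary inequality for the underlying scalar separations, this delivers $D_{ik}\le D_{ij}+D_{jk}$.

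The main obstacle I anticipate is precisely this triangle step, for two reasons. Because the offset $g(0)=\sqrt{2/\pi}>0$ and $g$ is convex ($g''(t)=\sqrt{2/\pi}\,e^{-t^2/2}>0$), neither the standard ``$g(0)=0$, concave $\Rightarrow$ sub-additive'' shortcut nor the usual metric-preserving criterion applies, so sub-additivity must be extracted from the delicate interplay between $g'<1$ and the lower bound $g(t)\ge t$ as above. More seriously, reducing the node-wise inequality $D_{ik}\le D_{ij}+D_{jk}$ to sub-additivity of $g$ requires identifying the scalar arguments across the three distinct edges $ij$, $jk$, $ik$ so that the triangle inequality for $\vert\cdot\vert$ can be invoked; making this identification precise within the probabilistic metric space of the Definition, and reconciling it with the fact that $g(0)\neq0$ forces the norm to be read in the sense induced by that structure rather than as a classical norm, is the step that most needs care, and I would settle it before committing to the final chain of inequalities.
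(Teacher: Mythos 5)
Your algebra and analysis are correct as far as they go, and I verified each ingredient: with $K=1$ the marginal in Equation~\ref{eqn:marg_final} plus the bare term $\vert G_{ij}-\vert\rho_{ij}\vert\vert$ does collapse, via $\mbox{erf}=1-\mbox{erfc}$, to $g(t):=\sqrt{2/\pi}\,e^{-t^2/2}+t\,\mbox{erf}(t/\sqrt{2})$ with $t=\vert G_{ij}-\vert\rho_{ij}\vert\vert$; the cross terms in $g'$ do cancel to give $g'(t)=\mbox{erf}(t/\sqrt{2})\in[0,1)$; your bound $g(t)\ge t$ via $\int_0^\infty \mbox{erfc}(s/\sqrt{2})\,ds=\sqrt{2/\pi}=g(0)$ is a nice argument (equivalently, $g(t)-t$ is exactly the marginal $m$, which is manifestly positive as an integral of a positive integrand); and the sub-additivity $g(x+y)\le g(x)+g(y)$ follows from the $1$-Lipschitz property together with $g(y)\ge y$ exactly as you say. (Note the paper relegates its own proof to Section~1 of the Supplementary Material, which is not in the main text, so I can only assess your argument against what the main text provides.) The genuine gap is the step you yourself flag and leave unsettled: to pass from sub-additivity of $g$ to $D_{ik}\le D_{ij}+D_{jk}$ you need the scalar inequality $t_{ik}\le t_{ij}+t_{jk}$, and nothing in the definitions forces it --- $G_{ik}$ and $\rho_{ik}$ are not functions of the quantities attached to the other two edges, so the three scalars are a priori unrelated. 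The device that closes this is the identification the paper makes in Proposition~\ref{prop:main}, namely $\vert G_{ij}-\vert\rho_{ij}\vert\vert\equiv\vert\mu_i-\mu_j\vert$ for node-level means $\mu_i$; once each edge scalar is a difference of node-level quantities, the triangle inequality for $\vert\cdot\vert$ is immediate and your chain $g(t_{ik})\le g(t_{ij}+t_{jk})\le g(t_{ij})+g(t_{jk})$ finishes the proof. A proof that stops where yours does has not established the triangle property.

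Two further observations. First, there is a shortcut that makes the delicate interplay between $g'<1$ and $g(t)\ge t$ unnecessary in this setting: since $G_{ij}\in\{0,1\}$ and $\vert\rho_{ij}\vert\in[0,1]$, one has $t\in[0,1]$, and $g$ is nondecreasing there, so $D_{ij}\in[g(0),g(1)]$ with $g(0)=\sqrt{2/\pi}\approx 0.798$ and $g(1)=\sqrt{2/\pi}\,e^{-1/2}+\mbox{erf}(1/\sqrt{2})\approx 1.167$; because $g(1)<2g(0)$, the inequality $D_{ik}\le D_{ij}+D_{jk}$ holds for every triple outright, with no relation among the scalars needed at all. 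Second, your closing caveat is correct and worth stating affirmatively rather than tentatively: since $g(0)>0$, the separation violates identity of indiscernibles (two coincident points have $D=\sqrt{2/\pi}\neq 0$), and there is no vector-space structure to support homogeneity, so the theorem's word ``norm'' cannot be read classically; it must be taken in the probabilistic-metric-space sense the paper intends, where the separation is complementary to the affinity measure and the cutoff ``radius'' is a probability. Your instinct to settle the edge-to-node identification before committing to the final chain was exactly right --- that identification, not the analysis of $g$, is where the content of the theorem lives.
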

The proof of this theorem is in Section~1 of the attached Supplementary Materials.

The global scale $K$ is chosen to ensure that $D_{ij}$ is a metric in the
probabilistic metric space ${\cal X}$.

Figure~\ref{fig:margdist} shows the variation with $\vert
G_{ij}-\vert\rho_{ij}\vert\vert$, in the unscaled distance $D_{ij}$ between
the $i$-th and $j$-th nodes; the unscaled affinity measure
$m(G_{ij}\vert\rho_{ij})$ between points $X_i$ and $X_j$; and the difference
between the unscaled distance and the affinity measure, for $G_{ij}$ set to 1
(trends displayed in the left panel), and $G_{ij}$ set to 0 (results shown on
the right). In our SRGG, the $ij$-th edge exists with the probability that is
given by the probability that affinity $m(G_{ij}\vert\rho_{ij})$ between $X_i$
and $X_j$ exceeds cutoff probability $\tau$. Now, probability for the $ij$-th
edge to exist, increases with increase in $\vert\rho_{ij})\vert$; so we expect
$m(G_{ij}=1\vert\rho_{ij})$ to increase with $\vert\rho_{ij})\vert$. This
trend is borne in the results displayed in Figure~\ref{fig:margdist}.
From this figure, we also see that the affinity measure defined
in terms of the marginal posterior of the edge, given the partial correlation, is
complementary to the distance $D_{ij}$, in the sense that as affinity
increases, distance decreases, for a given value of $G_{ij}$, $\forall
i,j\in\{1,2,\ldots,p\}; i\neq j$.
\begin{figure}[!hbt]
{
\hspace*{0in}
\includegraphics[width=12cm]{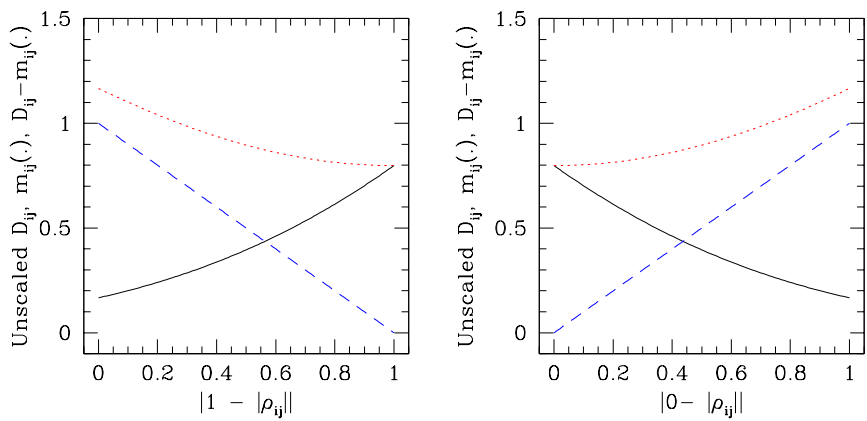}
}
\vspace*{-.0in}
\caption{Figure displaying variation with the input $\vert
G_{ij}-\vert\rho_{ij}\vert\vert$, of unscaled distance
$D_{ij}$ between the $i$-th and $j$-th nodes (in dotted lines); the unscaled affinity measure
$m(G_{ij}\vert\rho_{ij})$ between points $X_i$ and $X_j$ (in solid line); and the difference
between the unscaled distance and the affinity measure (in dashed lines), for $G_{ij}=1$ (left
panel) and $G_{ij}=0$ (right panel). Here the input
variable is the absolute of the
difference between value (0 or 1) of the edge variable $G_{ij}$ between $i$-th and
$j$-th nodes, and the absolute partial correlation $\vert\rho_{ij}\vert$
between points $X_i$ and $X_j$. As $\vert\rho_{ij}\vert$ increases,
probability for the edge connecting $X_i$ and $X_j$ to exist -- given by
the affinity measure $m_{ij}(G_{ij}=1\vert \rho_{ij})$ between points $X_i$
and $X_j$ in our SRGG to exceed a chosen cutoff probability -- increases, while the probability for this
edge to not exist (i.e. for $G_{ij}$ to be 0) decreases, as shown on the
right. Distance between points $X_i$ and $X_j$ is defined to be complimentary
to the affinity.}
\label{fig:margdist}
\end{figure}



\subsection{Edge set of our SRGG}
\label{sec:edgeset}
\begin{definition}
\label{defn:edgeset}
  Then edge $G_{ij}$ exists in the graph, independently of any other edge, if
  and only if, affinity between $X_i$ and $X_j$ exceeds a threshold
  probability $\tau$, i.e.
$$m(G_{ij}\vert \rho_{ij}) \geq \tau.$$
Thus, the value of the $ij$-th edge parameter $G_{ij}$ is
\begin{eqnarray}
g_{ij} = 1 && \quad {\mbox{if}} \quad m(G_{ij}\vert \rho_{ij}) \geq \tau; i\neq j; i,j\in\{1,\ldots,p\}\\ \nonumber
g_{ij} = 0 && \quad {\mbox{otherwise}}
\label{eqn:edge_defn}
\end{eqnarray}
Here the $ij$-th edge exists in the graph if $g_{ij}=1$, and does not exist if
$g_{ij}=0$. Thus, the edge set of our graph is
$$E_p = \{G_{ij}: m(G_{ij}\vert \rho_{ij}) \geq \tau; X_i\neq X_j; X_i,X_j\in\bV\}.$$ 
\end{definition}

\begin{remark}
As $\rho_{ij}$ is dependent on the $i$-th an
$j$-th vertices, marginal $m(G_{ij}|\rho_{ij})$ that is a function of
$\rho_{ij}$, is $i$ and $j$-dependent too, implying that the connection
function is location-dependent in this graph. Then following
\ctn{penrose_2016}, our notation for this
location-dependent SRGG defined at threshold parameter $\tau$ is ${\cal G}_{m,
  \bR}(\bV, \tau)$, where the partial correlation matrix is
$$\bR^{(p\times p)}=[\rho_{ij}],$$
where cardinality of $\bV$ is $p$.
\end{remark}

\subsection{Point process background for our SRGG}
\label{sec:point}
\noindent
In this sub-section, we present the underlining Point Process (PP) that can
generate our Soft RGG ${\cal G}_{m,\bR}(\bV, \tau)$. To undertake this, we now
model the points $X_i,X_j\in{\cal X}$, as random variables $X_i$ and $X_j$
that are Normally distributed with means $\mu_i$ and $\mu_j$, and the variance
$\sigma^2$.

\begin{theorem}
Distance between random variables $X_i, X_j\in{\cal X}$ -- where ${\cal X}$ is
the host probabilistic metric space -- is given as:
$$ D(X_i, X_j) = \displaystyle{\frac{2\sigma}{\sqrt{\pi}}
\exp\left(\frac{-\vert\mu_i - \mu_j\vert^2}{4\sigma^2}\right) +
\vert\mu_i - \mu_j\vert {\mbox{erf}}\left(\frac{\vert\mu_i -
    \mu_j\vert}{2\sigma}\right)}, $$
with $X_i\sim {\cal N}(\mu_i, \sigma^2)$, $X_j\sim {\cal N}(\mu_j, \sigma^2)$.
\end{theorem}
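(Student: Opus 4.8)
The plan is to identify $D(X_i, X_j)$ with the mean of the probabilistic distance distribution $F_{X_i, X_j}$ that the probabilistic metric space $\mathcal{X}$ assigns to the pair, i.e. the expected absolute separation $E[|X_i - X_j|]$, and then to evaluate this expectation in closed form. First I would invoke the independence of $X_i$ and $X_j$ to note that their difference $Y := X_i - X_j$ is again Gaussian, $Y \sim \mathcal{N}(\mu_i - \mu_j, 2\sigma^2)$, so that $|Y|$ follows a folded normal law. The sought distance is then the folded-normal mean $E[|Y|]$, and the entire computation reduces to the same integral that underlies our first displayed Theorem, now carried through with variance $2\sigma^2$ in place of unity.

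Concretely, I would write $E[|Y|] = \int_{-\infty}^{\infty} |y| \, \varphi(y) \, dy$ with $\varphi$ the $\mathcal{N}(\mu_i - \mu_j, 2\sigma^2)$ density, split the integral at $y = 0$, and substitute $u = y - (\mu_i - \mu_j)$ in each piece. Each half then contributes two terms: an exponential term arising from the antiderivative $-2\sigma^2 e^{-u^2/(4\sigma^2)}$ of $u \, e^{-u^2/(4\sigma^2)}$, and a term proportional to $(\mu_i - \mu_j)$ times a standard-normal tail probability. On adding the two halves, the exponential contributions reinforce into a single $\exp(-|\mu_i - \mu_j|^2/(4\sigma^2))$ term, while the difference of the two tail probabilities collapses to an error function through the identity $\Phi(t) - \Phi(-t) = \mbox{erf}(t/\sqrt{2})$.

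The remaining step is purely algebraic: collecting constants gives $\sqrt{2}\,\sigma \cdot \sqrt{2/\pi} = 2\sigma/\sqrt{\pi}$ for the exponential term, and the error-function argument simplifies from $(\mu_i - \mu_j)/(\sqrt{2}\cdot\sqrt{2}\,\sigma)$ to $|\mu_i - \mu_j|/(2\sigma)$, using that $\mbox{erf}$ is odd, so that its product with $(\mu_i - \mu_j)$ depends only on $|\mu_i - \mu_j|$. This reproduces the claimed expression exactly and exhibits the structural parallel with the earlier result, where the argument $|G_{ij} - |\rho_{ij}||$ and unit scale are here replaced by $|\mu_i - \mu_j|$ and scale $\sqrt{2}\,\sigma$.

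I expect the principal obstacle to be bookkeeping rather than conceptual: keeping the signs consistent when splitting $E[|Y|]$ at the origin, correctly tracking how the factor $\sqrt{2}$ from the variance $2\sigma^2$ propagates through both the prefactor and the error-function argument, and justifying the odd-function reduction to $|\mu_i - \mu_j|$ so that the resulting formula is symmetric in $i$ and $j$, as a probabilistic metric requires. A secondary matter to settle is the definitional one, namely confirming that the distance in this probabilistic metric space is indeed the mean of $F_{X_i, X_j}$, so that the folded-normal mean computed above is genuinely the quantity $D(X_i, X_j)$ asserted in the theorem.
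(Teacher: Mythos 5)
Your proposal is correct and takes essentially the same route as the paper: the paper's proof likewise defines $D(X_i,X_j)$ as the expected absolute difference ${\mathbb E}\left[\vert X_i - X_j\vert\right]$ of the two independent Gaussians, written as a double integral against $f_{X_i}f_{X_j}$, and evaluates it to the stated closed form. Your reduction via $Y = X_i - X_j \sim {\cal N}(\mu_i-\mu_j,\, 2\sigma^2)$ and the folded-normal mean is just a tidier organisation of that same computation, and your bookkeeping is right: the prefactor $\sqrt{2}\,\sigma\cdot\sqrt{2/\pi} = 2\sigma/\sqrt{\pi}$ and the error-function argument $\vert\mu_i-\mu_j\vert/(2\sigma)$ both match the paper's result.
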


\begin{proof}
From definition of distance in probabilistic metric spaces, it follows that 
distance between $X_i$ and $X_j$ in ${\cal X}$ is
$$D(X_i, X_j) := \displaystyle{\int\limits_{-\infty}^\infty
  \int\limits_{-\infty}^\infty \vert x_i - x_j\vert f_{X_i}(x_i) f_{X_j}(x_j)
  dx_i dx_j}.$$
Then for $X_i\sim {\cal N}(\mu_i, \sigma^2)$, $X_j\sim {\cal N}(\mu_j,
\sigma^2)$, distance between these Normally distributed variables is
\begin{eqnarray}
&& D_{NN}(X_i, X_j) \nonumber \\
&:=& \displaystyle{\frac{1}{2\pi\sigma^2}
\int\limits_{-\infty}^\infty  \int\limits_{-\infty}^\infty 
\vert x_i - x_j\vert \exp\left(-\frac{(x_i - \mu_i)^2}{2\sigma^2}\right)
\exp\left(-\frac{(x_j - \mu_j)^2}{2\sigma^2}\right)   dx_i dx_j}\nonumber \\
&=& \displaystyle{
 \frac{2\sigma}{\sqrt{\pi}}\exp\left(-\frac{(\mu_i-\mu_j)^2}{4\sigma^2}\right) 
+ \vert \mu_i - \mu_j\vert {\mbox{erf}}\left(\frac{\vert\mu_i
    -\mu_j\vert}{2\sigma}\right)}, 
\label{eqn:d_NN}
\end{eqnarray}
where the subscript ``$NN$'' in the LHS qualifies the distance as between 2
Normally distributed r.v.s.
\end{proof}

\begin{proposition}
\label{prop:main}
SRGG ${\cal G}_{m,\bR}(\bV, \tau)$ drawn with affinity cut-off $\tau$, in the probabilistic metric space
${\cal X}$, with affinity measure $m(G_{ij}\vert \rho_{ij})$ defined as in
Equation~\ref{eqn:marg_final}, results from randomly placing Normally
distributed random variables with respective means, and a global variance
$\sigma^2$, in ${\cal X}$, as long as we set:
$$\sigma \equiv \displaystyle{\frac{1}{\sqrt{2}}}; \quad 
\vert\mu_i - \mu_j\vert\equiv \vert G_{ij} - \vert \rho_{ij}\vert\vert. $$
Here matrix $\bR=[\rho_{ij}]$.
\end{proposition}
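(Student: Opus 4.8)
The plan is to reduce the proposition to the algebraic identity between the two distance formulas already established, and then to translate that identity into an equality of connection functions (affinities), since it is the affinity that governs the edge set of the SRGG.

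First I would recall the distance $D_{NN}(X_i, X_j)$ between the two Normally distributed random variables $X_i\sim{\cal N}(\mu_i,\sigma^2)$ and $X_j\sim{\cal N}(\mu_j,\sigma^2)$, as computed in Equation~\ref{eqn:d_NN}. I would then substitute the two prescribed identifications, $\sigma=1/\sqrt{2}$ and $\vert\mu_i-\mu_j\vert=\vert G_{ij}-\vert\rho_{ij}\vert\vert$, directly into that expression and verify term by term that the result collapses to the norm $D_{ij}$ of Equation~\ref{eqn:dist_aff}. Concretely, with $\sigma=1/\sqrt{2}$ the leading coefficient $2\sigma/\sqrt{\pi}$ becomes $\sqrt{2/\pi}$; the factor $4\sigma^2$ in the Gaussian exponent becomes $2$, so the exponent reads $-(G_{ij}-\vert\rho_{ij}\vert)^2/2$; and the scale $2\sigma$ inside the error function becomes $\sqrt{2}$, so its argument reads $\vert G_{ij}-\vert\rho_{ij}\vert\vert/\sqrt{2}$. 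Each term then coincides with the corresponding term of $D_{ij}$, whence $D_{NN}(X_i,X_j)=D_{ij}$ under the stated substitutions.

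Next I would convert this equality of distances into an equality of affinities. By the first Theorem, $D_{ij}=m(G_{ij}\vert\rho_{ij})+\vert G_{ij}-\vert\rho_{ij}\vert\vert$, so the connection function can be written as $m(G_{ij}\vert\rho_{ij})=D_{ij}-\vert G_{ij}-\vert\rho_{ij}\vert\vert=D_{NN}(X_i,X_j)-\vert\mu_i-\mu_j\vert$, the last equality using both the distance identity just proved and the identification $\vert\mu_i-\mu_j\vert=\vert G_{ij}-\vert\rho_{ij}\vert\vert$. Thus placing the Normal variables with the prescribed means and variance reproduces exactly the affinity measure of Equation~\ref{eqn:marg_final} for every pair $(i,j)$. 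Since, by Definition~\ref{defn:edgeset}, an edge is present precisely when $m(G_{ij}\vert\rho_{ij})\geq\tau$, the edge set obtained from the Normal-variable placement at cut-off $\tau$ coincides with the edge set $E_p$ of ${\cal G}_{m,\bR}(\bV,\tau)$, establishing the asserted generative equivalence.

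The main obstacle is conceptual rather than computational: the term-by-term reduction is routine, so the crux is justifying that a pairwise match of distances (equivalently, of affinities) suffices to identify the entire SRGG as the output of the proposed placement of Normal random variables. This hinges on the fact that in our construction each edge is decided independently through its own affinity $m(G_{ij}\vert\rho_{ij})$ against the common threshold $\tau$, so once every pairwise affinity is reproduced no further global structure need be checked. I would also take care that the identification $\vert\mu_i-\mu_j\vert=\vert G_{ij}-\vert\rho_{ij}\vert\vert$ is an identity of absolute separations, so that the sign-invariance already built into both $D_{ij}$ and $D_{NN}$ makes the matching consistent with the folded-Gaussian form of the likelihood in Equation~\ref{eqn:folded}.
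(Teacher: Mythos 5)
Your proposal is correct and follows essentially the same route as the paper's own proof: substituting $2\sigma=\sqrt{2}$ and $\vert\mu_i-\mu_j\vert=\vert G_{ij}-\vert\rho_{ij}\vert\vert$ into the Normal--Normal distance of Equation~\ref{eqn:d_NN} and observing that it collapses to $D_{ij}=m(G_{ij}\vert\rho_{ij})+\vert G_{ij}-\vert\rho_{ij}\vert\vert$ of Equation~\ref{eqn:dist_aff}. Your additional remarks -- explicitly recovering the affinity as $D_{NN}-\vert\mu_i-\mu_j\vert$ and noting that pairwise, threshold-based edge decisions make the pairwise match sufficient for the whole graph -- are sound elaborations of what the paper leaves implicit, not a different argument.
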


\begin{proof}
We see in Equation~\ref{eqn:d_NN} that setting $2\sigma=\sqrt{2}$, and 
$\vert\mu_i - \mu_j\vert\equiv \vert G_{ij} - \vert \rho_{ij}\vert\vert$, 
implies distance between Normally distributed r.v.s $X_i\sim{\cal
  N}(\mu_i,\sigma)$, $X_j\sim{\cal N}(\mu_j,\sigma)$, with $X_i,X_j\in{\cal X}$
is
$$D_{NN}(X_i, X_j)=\displaystyle{
 \frac{\sqrt{2}}{\sqrt{\pi}}\exp\left(-\frac{(G_{ij}-\rho_{ij})^2}{2}\right) 
+ \vert G_{ij} - \rho_{ij}\vert {\mbox{erf}}\left(\frac{\vert G_{ij}
    -\vert\rho_{ij}\vert\vert}{\sqrt{2}}\right)} = D_{ij},$$ 
where $D_{ij}$ is defined in Equation~\ref{eqn:dist_aff} in terms of the
complimentary connection function, (i.e. complimentary affinity measure) of our SRGG, as
$D_{ij} := m(G_{ij}\rho_{ij})+  \vert G_{ij}-\vert\rho_{ij}\vert\vert$.
\end{proof}


\begin{theorem}
SRGG ${\cal G}_{m,\bR}(\bV, \tau)$, drawn with affinity cut-off $\tau$, in the probabilistic metric space
${\cal X}$, with affinity measure $m(G_{ij}\vert \rho_{ij})$ defined as in
Equation~\ref{eqn:marg_final}, is generated by a non-homogeneous Poisson point
process. 
\end{theorem}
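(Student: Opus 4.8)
The plan is to leverage Proposition~\ref{prop:main}, which represents the SRGG ${\cal G}_{m,\bR}(\bV,\tau)$ as arising from the random placement of the Normally distributed variables $X_i\sim{\cal N}(\mu_i,\sigma^2)$ with $\sigma^2=1/2$, and then to exhibit this placement as a realisation of a Poisson point process whose intensity varies with location. The principal tools are the Superposition Theorem for Poisson processes and the defining construction of a soft RGG over a Poisson process: I would build the vertex configuration as a superposition of single-particle Gaussian processes, observe that the resulting intensity is a non-constant function of position, and then attach edges through the independent, location-dependent connection rule of Definition~\ref{defn:edgeset}.

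First I would associate to each vertex $X_i$ a Poisson process $\Pi_i$ on ${\cal X}$ with intensity equal to the Gaussian density
\[
\lambda_i(x) = \frac{1}{\sqrt{2\pi\sigma^2}}\exp\left(-\frac{(x-\mu_i)^2}{2\sigma^2}\right),
\]
so that the random location of $X_i$ is encoded by $\Pi_i$. Superposing over $i=1,\ldots,p$ and invoking the Superposition Theorem, the vertex set $\bV$ becomes the Poisson process $\Pi=\bigcup_{i=1}^{p}\Pi_i$ with intensity
\[
\lambda(x)=\sum_{i=1}^{p}\frac{1}{\sqrt{2\pi\sigma^2}}\exp\left(-\frac{(x-\mu_i)^2}{2\sigma^2}\right).
\]
Because $\lambda(x)$ is a sum of Gaussian kernels centred at the distinct means $\mu_i$, it is a genuinely non-constant function of the location $x$; hence $\Pi$ is \emph{non-homogeneous}. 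The next step is to add edges: by Definition~\ref{defn:edgeset} each edge $G_{ij}$ is present, independently of every other edge, precisely when $m(G_{ij}\vert\rho_{ij})\geq\tau$. This independent decoration of the Poisson vertex set $\Pi$ by the connection function $m(\cdot\vert\cdot)$ is exactly the construction of a soft RGG over a Poisson process in the sense of \ctn{penrose_2016}; consequently ${\cal G}_{m,\bR}(\bV,\tau)$ is generated by $\Pi$, and since $\lambda(x)$ is location-dependent, by a non-homogeneous Poisson process.

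The hard part will be reconciling the fixed cardinality $\vert\bV\vert=p$ with the random total count intrinsic to a Poisson process, and handling the fact that the intensity depends on the correlation matrix $\bR$, which is itself random in our learning scheme. Conditionally on $\bR$ (equivalently, with the separations $\vert\mu_i-\mu_j\vert=\vert G_{ij}-\vert\rho_{ij}\vert\vert$ fixed through Proposition~\ref{prop:main}), the directing intensity $\lambda(x)$ is a fixed location-dependent function and the superposition argument applies cleanly; unconditionally the construction is a doubly stochastic Poisson process whose directing measure is random, and one recovers the non-homogeneous Poisson statement after conditioning. Making this conditioning precise, and verifying that the single-particle intensities $\lambda_i$ integrate so that the superposition reproduces exactly the law of the Gaussian placement of Proposition~\ref{prop:main}, is where the care is required; once the intensity $\lambda(x)$ above is identified, its non-constancy in $x$ delivers the non-homogeneity immediately.
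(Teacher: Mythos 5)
There is a genuine gap at the very first step of your construction. A Poisson process $\Pi_i$ with intensity $\lambda_i(x)=\frac{1}{\sqrt{2\pi\sigma^2}}\exp\left(-\frac{(x-\mu_i)^2}{2\sigma^2}\right)$ has total intensity measure $1$, so its number of points is Poisson$(1)$: it places zero points with probability $e^{-1}$ and two or more with positive probability. It therefore does \emph{not} ``encode the random location of $X_i$'' -- a single point at a Gaussian-random position has a different law. Consequently, while the superposition $\Pi=\bigcup_i\Pi_i$ is indeed Poisson with intensity $\lambda(x)=\sum_i\lambda_i(x)$, it is not the vertex configuration $\bV$ of the graph; conversely, the actual configuration of $p$ independently placed Gaussian points is a binomial point process, which is not Poisson (its counts on disjoint sets are negatively dependent through the fixed total $p$). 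You flag this as ``the hard part,'' but the remedy you sketch does not close it: conditioning each $\Pi_i$ to contain exactly one point destroys the independence of counts on disjoint regions, so the conditioned superposition is no longer Poisson, and the Cox-process observation about random $\bR$ is orthogonal to this difficulty. As written, the deferred step \emph{is} the theorem.

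Your route also diverges from the paper's in a way that changes what is being proved. The paper fixes a node $X_i$, draws a ball $B(X_i,a)$ of probability-radius $a$ in ${\cal X}$, and computes the expected number $N(a)$ of elements of $\bV$ in the ball that survive the threshold, via Heaviside factors $H\left(m(G_{ij}\vert\rho_{ij})-\tau\right)$; this yields ${\mathbb E}[N(a)]=f_{X_i}(\mu_i,\sigma)\,\pi a^2\,Q_{\bR,\tau}$ with $Q_{\bR,\tau}:=\sum_{j=1}^p H\left(m(G_{ij}\vert\rho_{ij})-\tau\right)$, read as intensity times area, so that the remaining points \emph{approximate} a non-homogeneous Poisson process with node-dependent rate $\lambda_i=f_{X_i}(\mu_i,\sigma)\,Q_{\bR,\tau}$. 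Two things matter here: the paper's claim is explicitly an approximation statement (which sidesteps the fixed-$p$ versus Poisson-count tension you correctly identified), and its intensity incorporates the connection function $m(\cdot\vert\cdot)$ and the cutoff $\tau$ through $Q_{\bR,\tau}$, so the Poisson structure asserted is that of the SRGG itself -- the thresholded neighbour counts -- not merely of the vertex locations. Your $\lambda(x)=\sum_i\lambda_i(x)$ contains no trace of $m$ or $\tau$, with edges added only as an independent decoration afterward; even if the Poissonization step were repaired (e.g.\ by a Poissonized number of vertices), you would have established a statement about the vertex process alone, which is weaker than, and different from, what the theorem and the paper's proof assert.
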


\begin{proof}
Vertex set of SRGG ${\cal G}_{m,\bR}(\bV, \tau)$ is $\bV=\{X_1,\ldots,X_p\}$.\\
For any $i\in\{1,\ldots,p\}$, let ball $B(X_i, a)$ be drawn in ${\cal X}$,
centred at $X_i$, with radius $a$, where $a \in [0,1]$, given that radius of a
ball in ${\cal X}$ is a probability.\\
Let r.v. $N(a) := $ number of elements of $\bV$ inside $B(X_i, a)$.\\
Then $\forall B(X_i,a) \subset {\cal X}$, recalling that $G_{ij}=0$ if
$m(G_{ij}\vert\rho_{ij}) < \tau$, the expectation of $N(a)$ is:
\begin{eqnarray}
{\mathbb E}[N(a)] &=& \displaystyle{\sum\limits_{j=1}^p
  f_{X_i}(\mu_i,\sigma)\pi a^2 H(m(G_{ij}\vert\rho_{ij})-\tau) 
  },\\ \nonumber
&=& f_{X_i}(\mu_i,\sigma)\pi a^2 \displaystyle{\sum\limits_{j=1}^p
  H(m(G_{ij}\vert\rho_{ij}) - \tau) 
  }, \\ \nonumber
&\equiv& f_{X_i}(\mu_i,\sigma)\pi a^2 Q_{\bR,\tau}\nonumber
\label{eqn:heave}
\end{eqnarray}
where the Heaviside function $H(\cdot)$ is defined as
\begin{eqnarray}
H(x) &=& 1 \quad {\mbox{if}}\:\: x \geq 0 \\ \nonumber
&=& 0 \quad {\mbox{if}}\:\: x < 0 \\ \nonumber
\end{eqnarray}
and we introduce the notation:
$$ Q_{\bR,\tau}:=
\displaystyle{\sum\limits_{j=1}^p H(m(G_{ij}\vert\rho_{ij})-\tau)} $$
We further define 
$$\lambda_i := f_{X_i}(\mu_i,\sigma) Q_{\bR,\tau},$$
where we recall that
$$f_{X_i}(\mu_i,\sigma) = \displaystyle{\frac{1}{\sqrt{2\pi\sigma^2}} \exp\left(-\frac{(x_i - \mu_i)^2}{2\sigma^2}\right)}.$$
Then PP $\{X_1,\ldots,X_{i-1},X_{i+1},\ldots,X_p\}$ approximates a
non-homogeneous Poisson Process with intensity $\lambda_i$.\\
\end{proof}

\begin{remark}
Thus, our SRGG is underlined by a non-homogeneous Poisson
point process that has a location-dependent rate. Hence our SRGG is location-dependent. 
\end{remark}

\subsection{Dependence of graph on partial correlation matrix $\bR$}
\label{sec:corr}
\noindent
SRGG ${\cal G}_{m,\bR}(\bV, \tau)$ is learnt given the partial correlation
matrix $\bR$ that is itself learnt given the data. In fact, it is the
correlation matrix that is learnt given the data, and $\bR$ is computed at
every update of the correlation matrix. In this section we discuss the
learning of the correlation matrix given a multivariate dataset.

\begin{definition}
Let $\bX\in{\Xi}\subseteq{\mathbb R}^p$ be a $p$-dimensional observed vector,
with $\bX=(X_1,\ldots,X_p)^T$. \\
Let there be $n$ measurements of $X_j$, $j=1,\ldots,p$, so that the $n\times
p$-dimensional matrix ${\bf D}=[x_{ij}]_{i=1;j=1}^{n;p}$ is the data that
comprises $n$ measurements of the $p$-dimensional observable $\bX$. \\
Let the $i$-th realisation of $\bX$ be $\bx_i$, $i=1,\ldots,n$. \\
We model $\bX$, so that the set of $n$ realisations of this variable that comprises the data ${\bf D}$, is jointly matrix-Normal, i.e.
where this matrix-Normal density is parametrised by\\ 
---a mean matrix $\bmu^{(n\times p)}$;\\
---a covariance matrix $\bSigma_R^{(n\times p)}$, an element of which is the
covariance between a pair of rows in ${\bf D}$; \\
---a covariance matrix $\bSigma_C^{(p\times p)}$ that informs on inter-column
covariance in data ${\bf D}$. 
\end{definition}
We standardise the variable $X_j$ ($j=1,\ldots,p$) by its empirical mean
and standard deviation, into $Z_j$, s.t. the standardised version ${\bf D}_S$
of data ${\bf D}$ comprises $n$ measurements of the $p$-dimensional vector
$\bZ=(Z_1,\ldots,Z_p)^T$. Thus, 
$$z_{ij}=\displaystyle{\frac{x_{ij}-\bar{x}_j}{\Upsilon_j}},
\quad{\mbox{where}}\quad \bar{x}_j:=\displaystyle{\frac{\sum\limits_{i=1}^n
    x_{ij}}{n}}; 
\Upsilon_j^2 := \displaystyle{\frac{\sum\limits_{i=1}^n x_{ij}^2}{n}
  -\left(\frac{\sum\limits_{i=1}^n x_{ij}}{n}\right)^2}.$$ 
The $n\times p$-dimensional matrix ${\bf D}_S=[z_{ij}]$.

Then we model the joint probability of a set of 
$n$ measurements of $\bZ$, that comprises the
standardised data ${\bf D}_S$, to be matrix-Normal with zero-mean, i.e.
$$\{\bz_1,\ldots,\bz_n\}\sim{\cal MN}({\bf 0}, \bSigma_R^{(S)}, \bSigma_C^{(S)}),$$
i.e. the likelihood of the covariance matrices $\bSigma_R^{(S)}$ and $\bSigma_C^{(S)}$, given data ${\bf D}_S$, is matrix-Normal: 
$$\ell(\bSigma_R^{(S)}, \bSigma_C^{(S)}\vert {\bf D}_S) =$$
\begin{equation}
\displaystyle{
\frac{1}{(2\pi)^{\frac{np}{2}}
|\bSigma_C^{(S)}|^{\frac{p}{2}}
|\bSigma_R^{(S)}|^{\frac{n}{2}}}\times
\exp\left[
-\frac{1}{2}tr
\left\{(\bSigma_R^{(S)})^{-1} {\bf D}_S (\bSigma_C^{(S)})^{-1} ({\bf D}_S)^T\right\}
\right]
}.
\label{eqn:matrix_Normal_density}
\end{equation}
Here $\bSigma_R^{(S)}$ generates the covariance between the standardised variables $\bZ_i$ and $\bZ_{i^{/}}$, $i,i^{/}=1,\ldots,n$, (while 
$\bSigma_R$ generates the covariance between $\bX_i$ and $\bX_{i^{/}}$). Similarly, $\bSigma_C^{(S)}$ generates the correlation between columns of ${\bf D}_S$. 


\begin{theorem}
\label{theo:marg}
When the prior on $\bSigma_C^{(S)}$ is uniform,
the joint posterior probability density of the correlation matrices
$\bSigma_C^{(S)}, \bSigma_R^{(S)}$, given the standardised data ${\bf D}_S$ 
can be marginalised over the $n\times n$-dimensional inter-row correlation $\bSigma_R^{(S)}$, to yield
$$
\pi(\bSigma_C^{(S)}\vert{\bf D}_S)\propto
\displaystyle{\frac{1}{c\left(\bSigma_C^{(S)}\right){\Big{\vert}} \bSigma_C^{(S)}{\Big{\vert}}^{p/2} {\Big{\vert}}{\bf D}_S (\bSigma_C^{(S)})^{-1} ({\bf D}_S)^T {\Big{\vert}}^{\frac{n+1}{2}}}},
$$
where prior on $\bSigma_R^{(S)}$ is
the non-informative $\pi_0(\bSigma_R^{(S)})={\Big{\vert}}
\bSigma_R^{(S)}{\Big{\vert}}^{\alpha}$; $\alpha=\displaystyle{-\frac{n}{2}
  -1}$; and $\bSigma_C^{(S)}$ is assumed invertible. Here,
$c\left(\bSigma_C^{(S)}\right)$ is a function of $\bSigma_C^{(S)}$ that
normalises the likelihood. 

As posterior $\pi(\bSigma_C^{(S)}\vert{\bf D}_S)$ above is obtained for Uniform
prior on $\bSigma_C^{(S)}$, the likelihood of $\bSigma_C^{(S)}$ given data
${\bf D}_S$, i.e. {\it{pdf}} of ${\bf D}_S$ given $\bSigma_C^{(S)}$ is: 
$$
\ell(\bSigma_C^{(S)}\vert{\bf D}_S) \equiv f({\bf D}_S\vert\bSigma_C^{(S)})
\propto \pi(\bSigma_C^{(S)}\vert{\bf
  D}_S).
$$
\end{theorem}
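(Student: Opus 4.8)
The plan is to read Equation~\ref{eqn:matrix_Normal_density} as the likelihood, multiply it by the two priors to form the joint posterior of $(\bSigma_C^{(S)},\bSigma_R^{(S)})$, and then integrate out $\bSigma_R^{(S)}$ by recognising its conditional dependence as an Inverse-Wishart kernel. First I would write $\pi(\bSigma_C^{(S)},\bSigma_R^{(S)}\vert{\bf D}_S)\propto \ell(\bSigma_R^{(S)},\bSigma_C^{(S)}\vert{\bf D}_S)\,\pi_0(\bSigma_R^{(S)})$, the uniform prior on $\bSigma_C^{(S)}$ contributing only a constant. Using the cyclic invariance of the trace I would rewrite the exponent of Equation~\ref{eqn:matrix_Normal_density} as $-\tfrac12\,tr\{\bPsi\,(\bSigma_R^{(S)})^{-1}\}$, where $\bPsi:={\bf D}_S(\bSigma_C^{(S)})^{-1}({\bf D}_S)^T$ is $n\times n$. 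Collecting the powers of $\vert\bSigma_R^{(S)}\vert$ from the likelihood ($-n/2$) and from the prior ($\alpha=-n/2-1$) gives the combined exponent $-(n+1)$, so the $\bSigma_R^{(S)}$-dependent part of the integrand is $\vert\bSigma_R^{(S)}\vert^{-(n+1)}\exp\!\left(-\tfrac12\,tr\{\bPsi\,(\bSigma_R^{(S)})^{-1}\}\right)$.

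The main computation is then to identify this as the unnormalised kernel of an Inverse-Wishart density on the positive-definite $n\times n$ matrix $\bSigma_R^{(S)}$, with scale $\bPsi$ and degrees of freedom $\nu$ fixed by matching $(\nu+n+1)/2=n+1$, i.e.\ $\nu=n+1$. Integrating over the cone of positive-definite $n\times n$ matrices reproduces the reciprocal of the Inverse-Wishart normalising factor, $2^{\,n(n+1)/2}\,\Gamma_n\!\big((n+1)/2\big)\,\vert\bPsi\vert^{-(n+1)/2}$, whose leading factor is free of $\bSigma_C^{(S)}$ and is absorbed into the proportionality. Restoring the $\vert\bSigma_C^{(S)}\vert^{-p/2}$ that was untouched by the $\bSigma_R^{(S)}$-integration leaves the kernel $\vert\bSigma_C^{(S)}\vert^{-p/2}\,\vert{\bf D}_S(\bSigma_C^{(S)})^{-1}({\bf D}_S)^T\vert^{-(n+1)/2}$, which is the asserted expression save for the factor $c(\bSigma_C^{(S)})$.

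To pin down $c(\bSigma_C^{(S)})$ I would note that, because $\pi_0(\bSigma_R^{(S)})=\vert\bSigma_R^{(S)}\vert^{\alpha}$ is improper, the object just obtained is not a normalised sampling density in ${\bf D}_S$; reading it as a bona fide likelihood $f({\bf D}_S\vert\bSigma_C^{(S)})$ therefore requires division by $c(\bSigma_C^{(S)}):=\int \vert\bSigma_C^{(S)}\vert^{-p/2}\,\vert{\bf D}_S(\bSigma_C^{(S)})^{-1}({\bf D}_S)^T\vert^{-(n+1)/2}\,d{\bf D}_S$, a quantity that genuinely depends on $\bSigma_C^{(S)}$ and so cannot be swept into the proportionality constant. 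Since the prior on $\bSigma_C^{(S)}$ is uniform, Bayes' rule gives $\pi(\bSigma_C^{(S)}\vert{\bf D}_S)\propto f({\bf D}_S\vert\bSigma_C^{(S)})$, which installs $c(\bSigma_C^{(S)})$ in the denominator and yields exactly the displayed posterior, together with the concluding identity $\ell(\bSigma_C^{(S)}\vert{\bf D}_S)\equiv f({\bf D}_S\vert\bSigma_C^{(S)})\propto\pi(\bSigma_C^{(S)}\vert{\bf D}_S)$.

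I expect the principal obstacle to be guaranteeing convergence of the Inverse-Wishart integral: the scale matrix $\bPsi$ is positive-definite, hence $\vert\bPsi\vert\neq0$ and the integral finite, only when ${\bf D}_S$ has full row rank $n$, which the invertibility hypothesis on $\bSigma_C^{(S)}$ must be supplemented by a rank condition on the data to secure; if $\bPsi$ is singular the determinant in the denominator collapses and the marginalisation is ill-posed. The only other point needing care is the bookkeeping of which multiplicative factors depend on $\bSigma_C^{(S)}$, since it is precisely the surviving $\bSigma_C^{(S)}$-dependent normaliser $c(\bSigma_C^{(S)})$ -- and not the discarded $\bSigma_C^{(S)}$-free constants -- that appears in the final statement.
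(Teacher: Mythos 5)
Your proposal takes essentially the same route as the paper: form the joint posterior with the improper prior $\pi_0(\bSigma_R^{(S)})=\vert\bSigma_R^{(S)}\vert^{-n/2-1}$ (whose exponent is chosen precisely so that the likelihood's $\vert\bSigma_R^{(S)}\vert^{-n/2}$ combines to $-(n+1)$), recognise the $\bSigma_R^{(S)}$-dependent factor as an Inverse-Wishart kernel with scale ${\bf D}_S(\bSigma_C^{(S)})^{-1}({\bf D}_S)^T$ and $\nu=n+1$ degrees of freedom, integrate it out in closed form, and treat the surviving $\bSigma_C^{(S)}$-dependent normaliser $c(\bSigma_C^{(S)})$ exactly as the paper does via its subsequent Monte Carlo estimator. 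Your closing caveat about the rank of ${\bf D}_S(\bSigma_C^{(S)})^{-1}({\bf D}_S)^T$ is well spotted and in fact bites harder than you suggest (with $n>p$, as in the paper's wine data, this $n\times n$ matrix is necessarily singular); the paper addresses this only computationally, via ridge adjustment in the Cholesky step.
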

The proof of this theorem is provided in Section~2 of the attached
Supplementary Materials.

\begin{proposition}
{
An estimator of the normalisation ${c}\left(\bSigma_C^{(S)}\right)$ of the posterior $\left[\bSigma_C^{(S)}\vert{\bf D}_S\right]$,
given 
by Theorem~\ref{theo:marg}
$${\hat c}\left(\bSigma_C^{(S)}\right) = 
\displaystyle{
{\mathbb E}_{Z^{/}_{n^{/}p}}
\left[\ldots\left[
{\mathbb E}_{Z^{/}_{11}}
\left[ 
\displaystyle{\frac{1} {{\Big{|}}\left( {\bf D}^{/} (\bSigma_C^{(S)})^{-1} 
({\bf D}^{/})^T\right){\Big{|}}^{\frac{n^{/}+1}{2}}}}  
\right]
\right]\ldots\right]
}.$$
}
\end{proposition}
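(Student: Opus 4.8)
The plan is to identify $c\left(\bSigma_C^{(S)}\right)$ with the data-space normalisation of the marginal likelihood of Theorem~\ref{theo:marg}, and then to rewrite that normalising integral as an iterated expectation over the entries of a synthetic standardised data matrix---which is exactly the proposed estimator. First I would invoke the closing statement of Theorem~\ref{theo:marg}: as the prior on $\bSigma_C^{(S)}$ is uniform, the marginal posterior is proportional to the likelihood $f({\bf D}_S\vert\bSigma_C^{(S)})$, and $c\left(\bSigma_C^{(S)}\right)$ is the factor that renders this likelihood a proper density over the data space. Imposing $\int f({\bf D}_S\vert\bSigma_C^{(S)})\,d{\bf D}_S=1$, while carrying the determinant prefactor $\vert\bSigma_C^{(S)}\vert^{p/2}$ explicitly and separately (as it already appears in the posterior of Theorem~\ref{theo:marg}), yields
$$c\left(\bSigma_C^{(S)}\right)=\int\frac{d{\bf D}_S}{\left\vert{\bf D}_S(\bSigma_C^{(S)})^{-1}({\bf D}_S)^T\right\vert^{\frac{n+1}{2}}}.$$

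Next I would write this $np$-fold Lebesgue integral over ${\bf D}_S$ as an iterated integral over its individual entries, ordered $z_{11},\ldots,z_{np}$. Since every standardised column has zero empirical mean and unit variance, each entry is integrated against the common reference law of the standardised data; reading integration against this law as expectation, and renaming the (to be sampled) integration variables as the entries $Z^{/}_{ij}$ of a synthetic $n^{/}\times p$ matrix ${\bf D}^{/}$, turns the innermost integral into ${\mathbb E}_{Z^{/}_{11}}[\cdot]$ and, on nesting the remaining entries one at a time, reproduces precisely the iterated expectation in the statement, which I would take as the definition of $\hat c\left(\bSigma_C^{(S)}\right)$. By Fubini this iterated expectation equals the joint expectation of $\vert{\bf D}^{/}(\bSigma_C^{(S)})^{-1}({\bf D}^{/})^T\vert^{-\frac{n^{/}+1}{2}}$, and hence matches the normalising integral above once $n^{/}=n$; in practice it is evaluated by averaging that integrand over independently sampled synthetic matrices, which the strong law of large numbers renders a consistent (and, per draw, unbiased) approximation.

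The hard part will be pinning down the reference sampling law for the entries $Z^{/}_{ij}$ so that the iterated expectation is genuinely unbiased for the normalising integral. The Lebesgue measure above must be matched to a bona fide probability law on the entries---either a uniform law over the observed standardised range or the unit-variance Gaussian reference implied by standardisation---and any resulting importance weight must be shown to be constant in ${\bf D}^{/}$, hence absorbable into the separately carried prefactor $\vert\bSigma_C^{(S)}\vert^{p/2}$; otherwise $\hat c$ estimates a reweighted integral rather than $c$ itself. A companion point is finiteness: the integrand decays like $\Vert{\bf D}_S\Vert^{-n(n+1)}$ against an $n^{/}p$-dimensional volume element, so integrability at infinity requires $n+1>p$ (equivalently, the sampling law must supply the missing tail decay when $p\ge n$), while integrability near the rank-deficient locus $\vert{\bf D}_S(\bSigma_C^{(S)})^{-1}({\bf D}_S)^T\vert=0$ must also be checked; these are exactly the conditions under which $\hat c\left(\bSigma_C^{(S)}\right)$ is well defined.
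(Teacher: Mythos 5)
The central step of your argument---identifying the Proposition's iterated expectation with the Lebesgue normalising integral $\int \vert{\bf D}\,(\bSigma_C^{(S)})^{-1}{\bf D}^T\vert^{-\frac{n+1}{2}}\,d{\bf D}$ via an unspecified ``reference law'' for the entries---is not what the paper does, and it is precisely the step you concede you cannot pin down. Unless the entries are drawn uniformly over a bounded box, the importance weight relating your sampling law to Lebesgue measure is non-constant in ${\bf D}^{/}$ (for the unit-variance Gaussian reference it is proportional to $\exp\bigl(\tfrac{1}{2}\sum_{i,j}(z^{/}_{ij})^2\bigr)$), so your $\hat{c}$ would estimate a reweighted integral, exactly the failure mode you flag; your proposal leaves this unresolved, so the proof is incomplete at its core. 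The paper reads the Proposition differently: the expectations ${\mathbb E}_{Z^{/}_{ij}}$ are taken with respect to the distribution of the entries of a synthetic dataset ${\bf D}^{/}$ that \emph{abides by the inter-column correlation} $\bSigma_C^{(S)}$---the sampling law is induced by the current value of $\bSigma_C^{(S)}$ itself, not matched to Lebesgue measure. Its proof then consists of two short moves: (i) replace the sequential entrywise expectations by the single block expectation ${\mathbb E}_{{\bf D}^{/}_{S}}\bigl[\vert{\bf D}^{/}(\bSigma_C^{(S)})^{-1}({\bf D}^{/})^T\vert^{-\frac{n^{/}+1}{2}}\bigr]$ (your Fubini step, in effect), and (ii) estimate that expectation by the Monte Carlo mean over $K$ sampled $n^{/}\times p$ datasets ${\bf D}_1^{/},\ldots,{\bf D}_K^{/}$, each generated with column correlation $\bSigma_C^{(S)}$ and with ${\bf D}_k^{/}(\bSigma_C^{(S)})^{-1}({\bf D}_k^{/})^T$ positive definite, as in Equation~\ref{eqn:est}.

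Two further mismatches with the paper's argument are worth noting. First, you require $n^{/}=n$ for the estimator to ``match the normalising integral,'' whereas in the paper $n^{/}$ is deliberately left free (the synthetic datasets may be smaller than the data), and the Proposition is presented as an approximation---the proof says ``we approximate the normalisation''---not as an unbiasedness claim for a data-space integral. Second, your integrability analysis (the tail condition $n+1>p$ and behaviour near the rank-deficient locus $\vert{\bf D}(\bSigma_C^{(S)})^{-1}{\bf D}^T\vert=0$) is mathematically sensible but addresses a statement the paper never makes; the paper handles degeneracy operationally, by demanding positive definiteness of ${\bf D}_k^{/}(\bSigma_C^{(S)})^{-1}({\bf D}_k^{/})^T$ for every draw, rather than by conditions on a Lebesgue integral. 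Your strong-law consistency remark does match the spirit of the paper's finite-$K$ sample mean, but because the expectation in your construction is taken under the wrong (unidentified) law, what your estimator converges to is not shown to be the quantity the Proposition names.
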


\begin{proof}
{
We substitute the sequential computing of expectations
with respect to (w.r.t.) distribution of each element of dataset ${\bf D}^{/}$
-- as suggested in the statement of this Proposition -- with computation 
of the expectation w.r.t. the block ${\bf D}^{/}$ of these elements, where
${\bf D}^{/}$ abides by the inter-column correlation of
$\bSigma_C^{(S)}$. Thus, we approximate the normalisation
${c}^{/}\left(\bSigma_C^{(S)}\right)$ 
as:
$${\hat c}^{/}\left(\bSigma_C^{(S)}\right) = 
\displaystyle{{\mathbb E}_{{\bf D}^{/}_{S}}
\left[\displaystyle{
\frac{1}{{\Big{|}}
\left( {\bf D}^{/}(\bSigma_C^{(S)})^{-1} ({\bf D}^{/})^T\right)
{\Big{|}}^{\frac{n^{/}+1}{2}}}
} \right]
}.$$ 

We consider the sample of $k$ number of $n^{/}\times p$-dimensional data sets 
$\{ {\bf D}_1^{/}, \ldots, {\bf D}_k^{/} \}$, 
where ${\bf D}_q^{/}$ abides by inter-column correlation 
$\bSigma_C^{(S)}$ $\forall q=1,\ldots,k$,\\ 
s.t. ${\bf D}_q^{/} (\bSigma_C^{(S)})^{-1} ({\bf D}_q^{/})^T$ is positive definite
$\forall q=1,\ldots,k$, at each $t$. 

Then an estimator of ${\hat c}^{/}\left(\bSigma_C^{(S)}\right)$ is 
\begin{equation}
{\hat{c}}_C^{(S)} := \displaystyle{
\frac{1}{K}
{\sum\limits_{k=1}^K 
\frac{1}{{\Big{|}}\left( {\bf D}_k^{/} (\bSigma_C^{(S)})^{-1} 
({\bf D}_k^{/})^T\right)
{\Big{|}}^{\frac{n^{/}+1}{2}}
}
}
}.
\label{eqn:est}
\end{equation}
}
\end{proof}
Generation of a randomly sampled $n^{/}\times p$-sized data
set ${\bf D}_k^{/}$, with column correlation $\bSigma_C^{(S)}$, is undertaken.

\section{Bayesian Inference on SRGG and Correlation Matrix using MCMC, and the
Compounding of Processes Underlying the SRGG}
\noindent
We perform Bayesian inference on the matrix $\bR=[\rho_{ij}]$ of partial
correlations between each distinct pair of the observed variables in data
${\bf D}_S$, and simultaneously, on the $\bR$-dependent SRGG ${\cal G}_{m,\bR}(\bV,\tau)$ drawn
in probabilistic metric space ${\cal X}$ with affinity measure $m(G_{ij}\vert
\rho_{ij})$, on vertex set $\bV$, with cutoff probability $\tau$. The inference
that we undertake is Markov Chain Monte Carlo-based, i.e. MCMC-based. In
particular it is an implementation of the Metropolis-with-2-block-update.

\begin{remark}
In the implementation of the Metropolis-with-2-block-update, the
inter-column correlation matrix $\bSigma_C^{(C)}$ of the data is first updated
-- at which the updated partial correlation matrix $\bR$ is computed -- for the
SRGG to be then updated, at this updated $\bR$.
\end{remark}

\begin{remark}
\label{rem:sp}
{
Posterior probability $\pi\left(\bSigma_C^{(S)}\vert{\bf D}_S\right)$ of inter-column
correlation matrix $\bSigma_C^{(S)}$, given data ${\bf D}_S$, as given in Theorem~\ref{theo:marg},  
implies that the sequence of realisations of $\bSigma_C^{(S)}$, at successive iterations
of the MCMC chain:  $$\{\bSigma_C^{(S)_0}, \bSigma_C^{(S)_1},
\bSigma_C^{(S)_2},\ldots\},$$
is a continuous-valued, discrete time
stochastic process, with underlying probability density $\pi\left(\bSigma_C^{(S)}\vert{\bf D}_S\right)$.
}
\end{remark}

\begin{definition}
{
Given a learnt value of the inter-column correlation matrix
$\bSigma_C^{(S)}$, to compute the value $\rho_{ij}$ of the partial correlation
$R_{ij}$ between r.v.s $X_i$ and $X_j$, we first invert the $p\times
p$-dimensional matrix $\bSigma_C^{(S)}$ to yield the precision matrix: 
$$\bPsi:=\left(\bSigma_C^{(S)}\right)^{-1};\:\bPsi=[\psi_{ij}],$$ s.t. the
partial correlation matrix $\bR=[\rho_{ij}]$, where $\rho_{ij}$ is
\begin{equation}
\rho_{ij} = -\displaystyle{\frac{\psi_{ij}}{\sqrt{\psi_{ii} \psi_{jj}}}},\quad i\neq j,
\label{eqn:6}
\end{equation}
and $\rho_{ii}=1$ for $i=j$. 
}
\end{definition}

\begin{proposition}
{
Following Proposition~\ref{prop:main}, the non-homogeneous Poisson point
process with $\bR$-dependent intensities, that generates the SRGG ${\cal G}_{m,\bR}(\bV,\tau)$, is
compounded with the continuous-valued stochastic
process $\{\bSigma_C^{(S)_t}\}_{t\in\{0,1,2,\ldots\}}$ (discussed in
remark~\ref{rem:sp}), with underlying
density $\pi\left(\bSigma_C^{(S)}\vert{\bf D}_S\right)$ that is defined in Theorem~\ref{theo:marg}.  
}
\end{proposition}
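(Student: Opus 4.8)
The plan is to exhibit the claimed compounding as a two-layer (doubly-stochastic) construction: the intensity of the Poisson process that generates ${\cal G}_{m,\bR}(\bV,\tau)$ is a deterministic, measurable function of $\bR$; the matrix $\bR$ is in turn a deterministic function of $\bSigma_C^{(S)}$; and $\bSigma_C^{(S)}$ evolves as the discrete-time stochastic process of Remark~\ref{rem:sp}. Composing these maps promotes the intensity from a fixed location-dependent rate to a random field driven by $\{\bSigma_C^{(S)_t}\}$, which is precisely the sense in which a non-homogeneous Poisson process is said to be compounded with (equivalently, directed by, as a Cox process) the process generating its intensity.

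First I would recall from the preceding Poisson-process theorem that, for each $i$, the intensity is $\lambda_i = f_{X_i}(\mu_i,\sigma)\,Q_{\bR,\tau}$, and observe that its only dependence on the graph enters through $Q_{\bR,\tau}=\sum_{j=1}^p H\!\left(m(G_{ij}\vert\rho_{ij})-\tau\right)$, which is a function of the entries $\rho_{ij}$ of $\bR$ alone. Next I would invoke the definition at Equation~\ref{eqn:6}: the map $\bSigma_C^{(S)}\mapsto\bPsi=(\bSigma_C^{(S)})^{-1}\mapsto\bR$, with $\rho_{ij}=-\psi_{ij}/\sqrt{\psi_{ii}\psi_{jj}}$, is a well-defined deterministic map on the set of invertible correlation matrices (invertibility being assumed in Theorem~\ref{theo:marg}). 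Since matrix inversion and the partial-correlation formula are continuous, hence Borel-measurable, on that set, the composition $\bSigma_C^{(S)}\mapsto\bR\mapsto\lambda_i$ is itself measurable, so that $\lambda_i=\lambda_i(\bSigma_C^{(S)})$ is a measurable functional of the inter-column correlation matrix.

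Finally I would use Remark~\ref{rem:sp}, which establishes that $\{\bSigma_C^{(S)_t}\}_{t\in\{0,1,2,\ldots\}}$ is a continuous-valued, discrete-time stochastic process with underlying density $\pi(\bSigma_C^{(S)}\vert{\bf D}_S)$ of Theorem~\ref{theo:marg}. Pushing this process forward through the measurable map of the previous step yields a stochastic process $\{\lambda_i^{(t)}\}_{t}$ of intensities, with $\lambda_i^{(t)}=\lambda_i(\bSigma_C^{(S)_t})$. Conditionally on a realisation $\bSigma_C^{(S)_t}$ the point process is the non-homogeneous Poisson process of the earlier theorem, while unconditionally the intensity is random and governed by $\pi(\bSigma_C^{(S)}\vert{\bf D}_S)$; this is exactly a compound (doubly-stochastic) Poisson process, as claimed.

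The main obstacle I anticipate is not computational but definitional and measure-theoretic: one must fix a precise notion of ``compounded'' (I would adopt the Cox/doubly-stochastic formulation, in which the directing intensity is itself a realisation of an auxiliary stochastic process) and then verify that the dependence chain is genuinely measurable end-to-end. The delicate links are that the inversion $\bSigma_C^{(S)}\mapsto(\bSigma_C^{(S)})^{-1}$ is only defined on invertible matrices and that the Heaviside indicator in $Q_{\bR,\tau}$ renders $\lambda_i$ piecewise-constant in $\bR$; neither breaks measurability, but both must be flagged so that the pushforward of $\{\bSigma_C^{(S)_t}\}$ is well-defined. One should also note that the underlying Poisson statement is itself an approximation, so the compounding is asserted at the level of the intensity field rather than as an exact distributional identity.
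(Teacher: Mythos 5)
Your proposal is correct and follows essentially the same route as the paper, which offers no separate proof but asserts the proposition directly from the chain you trace: the intensity $\lambda_i = f_{X_i}(\mu_i,\sigma)\,Q_{\bR,\tau}$ of the non-homogeneous Poisson process depends on $\bR$ (Proposition~\ref{prop:main}), $\bR$ is the deterministic image of $\bSigma_C^{(S)}$ under Equation~\ref{eqn:6}, and $\{\bSigma_C^{(S)_t}\}$ is the stochastic process of Remark~\ref{rem:sp} with density $\pi(\bSigma_C^{(S)}\vert{\bf D}_S)$. Your additions --- the explicit Cox/doubly-stochastic formulation, the end-to-end measurability check, and the caveat that the underlying Poisson statement is itself an approximation --- are refinements the paper leaves implicit, and they strengthen rather than depart from its argument.
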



\subsection{MCMC-driven definition of the 95$\%$ HPD credible regions on the
  learnt SRGG}
\label{sec:95}
\noindent
To
acknowledge uncertainties in the Bayesian learning of the sought SRGG, where
the uncertainties can be identified with a Bayesian 95$\%$ HPD credible region, 
we can suggest that the learnt SRGG only contains those edges, the affinity measures
of which exceed a global probability of 0.05. Then we are basically 
defining $\tau=0.05$ to define the edge set of the sought SRGG (see
Definition~\ref{defn:edgeset}). 

Within our MCMC-based inference, the random graph ${\cal
  G}_{m,\bR^{(t)}}(\bV,\tau)$, is sampled in the $t$-th
iteration of the inference scheme, given the inter-column correlation matrix
$\bSigma_C^{(S)_t}$ updated in this iteration, using which partial correlation
matrix $\bR^{(t)}=[\rho_{ij}^{(t)}]$ is updated in the $t$-th iteration;
$t=0,1,\ldots,N_{iter}$. 

In the $t$-th iteration, let affinity between r.v.s $X_i$ and $X_j$ ($X_i\neq
X_j; X_i,X_j\in\bV$) be given by the marginal $m(G_{ij}=g_{ij}^{(t)}\vert\rho_{ij}^{(t)})$.

Then the graphical model of data ${\bf D}_S$ is defined as the graph on
vertex set $\bV$, that includes edge
between $X_i$ and $X_j$, ($X_i\neq
X_j; X_i,X_j\in\bV$), if sample estimate of
${\mathbb E}[m(G_{ij}\vert\rho_{ij}^{(t)})]$ exceeds $\tau=0.05.$ We delineate this
definition formally in Definition~\ref{defn:95}.

A sample estimate of ${\mathbb E}[m(G_{ij}\vert\rho_{ij}^{(t)})]$ is
$${\hat{m}}(G_{ij}\vert\rho_{ij}^{(t)}) := {N_{ij}},$$
where $N_{ij}$ is the
the fraction of the post-burnin
($N_{post}$) number of iterations (where $N_{post}< N_{iter}+1$), in which the $ij$-th
edge exists, i.e. in which $G_{ij}$ takes the value 1,
$\forall\:i,j=1,2,\ldots,p,\:i\neq j$. Thus, variable
$N_{ij}$ takes the value  
\begin{equation}
n_{ij} := \displaystyle{\frac{\sum\limits_{t= N-N_{post}+1}^N g_{ij}^{(t)}}{N_{post}}},\quad i<j;\:i,j=1,\ldots,p.
\label{eqn:n}
\end{equation}

\begin{remark}
  $N_{ij}$ is the (post-burnin) sample mean of the affinity measure
  between $X_i$ and $X_j$ ($X_i,X_j\in\bV$), i.e. of the marginal of edge
  $G_{ij}$, conditional on the learnt inter-column (partial) correlation
  matrix of the given dataset. 

We define the $p\times p$ ``edge-parameter matrix'' as $$\bN:=[n_{ij}].$$
\end{remark}
Indeed, the $N_{ij}$ parameter is a function of the partial correlation $\rho_{ij}$, that is
itself learnt given this data, but for the sake of notational brevity, we do
not include this explicit $\bR$ dependence in our notation.

\begin{definition}
\label{defn:95}
The set of SRGGs:
$$\{{\cal G}_{m,\bR^{(t)}}(\bV,\tau)\}_{t=N-N_{post}+1}^N$$ 
sampled during the post-burnin part of any MCMC chain, on
the vertex set $\bV=\{X_1,\ldots,X_p\}$, for $\tau=0.05$, defines the
graphical model ${\hat{\cal G}}_{\bN,{\bf D}_S}(\bV,0.05)$
of the data ${\bf D}_S$, learnt within 95$\%$ HPD credible region, as the
graph on vertex set $\bV$, with the 
edge set:
$${\hat{E}}_p = \{G_{ij} : n_{ij} \geq 0.05; X_i\neq X_j; X_,X_j\in\bV\}.$$
Here, the (post-burnin) sample mean
of the affinity measure between $X_i$ and $X_j$ is $n_{ij}$
s.t. edge-parameter matrix is
$\bN=[n_{ij}]$.
\end{definition}

\subsection{Computational Details of Metropolis-with-2-block-update}
\label{sec:algo}
\noindent
Theorem~\ref{theo:marg} gives the posterior probability density of correlation matrix ${\bSigma}_C^{(S)}$, given
data ${\bf D}_S$. In our Metropolis-with-2-block-update based inference,
we update ${\bSigma}_C^{(S)}$--at which the partial correlation matrix $\bR$
is computed. Given this updated $\bR$, we then update the graph. 

In our learning of the $p\times p$-dimensional inter-column correlation
matrix ${\bSigma}_C^{(S)}=[s_{ij}]$ -- where $s_{ij}$ is the value of
r.v. $S_{ij}$ -- the $\displaystyle{\frac{p^2-p}{2}}$ non-diagonal
elements of the upper (or lower) triangle are learnt, i.e. the parameters
$S_{12}, S_{13},\ldots,$\\ $S_{1p}, S_{23},\ldots,S_{p-1\: p}$ are learnt. 

In the $t$-th iteration, $S_{ij}$ is proposed from a Truncated
Normal density that is left truncated at -1 and right truncated at 1, as
$$S_{ij}=s_{ij}^{(t*)}\sim{{\cal TN}}(s_{ij}^{(t*)}; s_{ij}^{(t-1)}, \sigma^2_{ij}, -1,
1),\quad,\forall\:i,j=1,\ldots,p;\:i\neq j,$$ where $\sigma_{ij}=\sigma_0^2\:\forall\:i,j$, is
the experimentally chosen variance, and the proposal mean is the current value
$s_{ij}^{(t-1)}$ of $S_{ij}$ at the end of the $t-1$-th iteration. 

At the 2nd
block of the $t$-th iteration, the SRGG is
updated, given the recently updated partial correlation matrix $\bR^{(t)}=[\rho_{ij}^{(t)}]$, s.t. the
proposed edge variable connecting the $i$-th to the $j$-th vertex is
$$ G_{ij}=g_{ij}^{(t\star)}\sim{Bernoulli}(g_{ij}^{(t\star)};\rho_{ij}^{(t)}).$$ 

This update also involves the likelihood of the
edge parameters $G_{ij}$ and variance parameters $\upsilon_{ij}$, $\forall
i\neq j; i,j\in\{1,\ldots,p\}$, given the updated partial correlation
matrix. We recall from Equation~\ref{eqn:folded} that this likelihood is
$$\ell\left(G_{12},\ldots,G_{1p},G_{23},\ldots,G_{2p},\ldots,G_{p\:p-1},\upsilon_{12},\ldots,\upsilon_{1p},\upsilon_{23},\ldots,\upsilon_{2p},\ldots,\upsilon_{p\:p-1}\vert\bR\right) $$
\begin{equation}
\displaystyle{
=\prod\limits_{i\neq j; i,j=1}^{p} \frac{1}{\sqrt{2\pi\upsilon_{ij}}}
\exp\left[-\frac{\left(G_{ij} - \vert \rho_{ij}\vert\right)^2}{2\upsilon_{ij}} 
\right]}. \nonumber
\end{equation}
The $ij$-th variance parameter is assigned a proposed value of 
$$\upsilon_{ij}^{(t\star)}\sim{\cal N}(\upsilon_{ij}^{(t\star)};
\upsilon_{ij}^{(t-1)}, w_{ij}^2),$$ 
where $w_{ij}^2$ is the experimentally chosen variance, and the mean is the
current value of the
$\upsilon_{ij}$ variable. 

As suggested in Theorem~\ref{theo:marg}, the correlation learning involves
computing 
$\left({\bSigma}_C^{(S)}\right)^{-1}$, $\vert {\bSigma}_C^{(S)}\vert$
and $\vert{\bf D}_S \left({\bSigma}_C^{(S)}\right)^{-1} \left({\bf
  D}_S\right)^T\vert$, in every iteration.
This calls for Cholesky decomposition of $\bSigma_C^{(S)}$ as
$\bL_C^{(S)}(\bL_C^{(S)})^T$, and of
${\bf D}_S \left({\bSigma}_C^{(S)}\right)^{-1} \left({\bf
  D}_S\right)^T$, into the (lower) triangular matrix $\bL$ and
$\bL^T$, while implementing ridge adjustment \ctp{wothke}. The latter computation follows the inversion of
$\bSigma_{C}^{(S)}$ into $(\bSigma_{C}^{(S)})^{-1}$, which is undertaken
using a forward substitution algorithm. (see Section~10 of the Supplementary
Materials).

Once the set of SRGGs sampled during the post-burnin part of the MCMC chain
are identified ($\{{\cal G}_{m,\bR^{(t)}}(\bV,\tau)\}_{t=N-N_{post}+1}^N$),
graphical model ${\hat{\cal G}}_{\bN,{\bf D}_S}(\bV,0.05)$ of the data ${\bf
  D}_S$ is then constructed, using this set of SRGGs.

\section{Inter-graph distance metric}
\label{sec:hell}
\noindent
We compute the Hellinger distance between the posterior probability of the
graphical model ${\hat{\cal G}}_{\bN_1,{\bf D}_1}(\bV,0.05)$ of the data ${\bf
  D}_1$, learnt within 95$\%$ HPD credible region, and the posterior
probability of the similarly learnt graphical model ${\hat{\cal
    G}}_{\bN_2,{\bf D}_2}(\bV,0.05)$ of the data ${\bf D}_2$, Distance between the
pair of uncertainty-included graphical models is computed using the Hellinger
metric, normalised by the uncertainty in the learning of each graphical model,
where such uncertainty 
is defined below (Definition~\ref{defn:uncert}). Here data ${\bf D}_1$ comprises $n_1$ rows and $p$ columns,
while data ${\bf D}_2$ comprises $n_2$ rows and $p$ columns. (As we soon
explain, we compute the Hellinger distance between the marginal posterior
probability of the edges in each of the considered pair of graphical models).

The inter-graph distance is computed, to inform on the absolute of the
correlation between the multivariate, disparately-sized datasets ${\bf D}_1$ and
${\bf D}_2$; in effect, the exercise can address the possible independence of
the {\it{pdf}}s that the two datasets are sampled from. This is of course a
hard question to address when the data comprise measurements of a
high-dimensional vector-valued observable.

\begin{definition}
{Square of Hellinger distance between two probability density functions $g(\cdot)$ and $h(\cdot)$ over a common domain ${\cal X}\in{\mathbb R}^m$, with respect to a chosen measure, is 
\begin{eqnarray}
D_H^2(g,f) &=& \displaystyle{\int\left(\sqrt{g(\bx)} - \sqrt{h(\bx)}\right)^2 d\bx} \nonumber \\ 
&=& \displaystyle{\int g(\bx)d\bx + \int h(\bx)d\bx -2\int \sqrt{g(\bx)}\sqrt{h(\bx)}d\bx } \nonumber \\ 
&=& \displaystyle{2\left(1 -\int \sqrt{g(\bx)}\sqrt{h(\bx)}d\bx\right) }.
\label{eqn:hell}
\end{eqnarray}}
\end{definition}
The Hellinger distance is closely related to the Bhattacharyya distance
\ctp{bhat} between two densities:
$D_B(g,f) =
\displaystyle{-log\left[\int\left(\sqrt{g(\bx)}\sqrt{h(\bx)}\right)^2
    d\bx\right]}$.

\begin{definition}
Consider the marginal posterior probability density of all the graph edge
parameters $\{G_{ij}\}_{i\neq j;i,j=1}^p$ given the partial correlation matrix $\bR_q$
(that is itself updated given the data ${\bf D}_S^{(q)}$); $q=1,2$.

In the $t$-th
iteration, value of the marginal posterior of all the edges\\
$\{G_{ij}^{(qt)}\}_{i\neq j;i,j=1}^p$, in the
$q$-th SRGG, given $\bR_{qt}$, is:
$$m(G^{(qt)}_{11},G^{(qt)}_{12},\ldots,G^{(qt)}_{p\;p-1}\vert \bR_{qt}),\quad
t=0,\ldots,N_{iter}.$$ 
Given the availability of the value of this marginal posterior density, only at
discretely sampled points in its support, (sampled at discrete times $t$), the
integral in the definition of the Hellinger distance is replaced by a sum in
our computation of the distance. 

Then for the $q$-th dataset, the marginal
posterior of all graph edge parameters in the $t$-th iteration is:
$$u_{q}^{(t)}:=m(G^{(qt)}_{11},G^{(qt)}_{12},\ldots,G^{(qt)}_{p\;p-1}\vert
\bR_{qt}),$$ 
which is employed to compute square of the (discretised version of the) Hellinger
distance between the two datasets as
\begin{equation}
D_H^2(u_{1},u_{2}) = \displaystyle{\frac{
\sum\limits_{t=N_{burnin}+1}^{N_{iter}}\left(\sqrt{u_{1}^{(t)}} - \sqrt{u_{2}^{(t)}}\right)^2}{N_{iter}-N_{burnin}}},
\label{eqn:sq_hell}
\end{equation}
\end{definition}
The Bhattacharyya distance can be similarly discretised.

However, MCMC does not provide normalised posterior probability densities --
we may employ Uniform (over identified finite intervals) priors on the variance parameters, the marginalised
posterior probability of the edge parameters is known only up to an unknown
scale. 

\begin{remark}
In the $t$-th iteration, MCMC provides value of 
logarithm $\ln(u_{q}^{(t)})$ of the un-normalised posterior of the edges of
the graph given the $q$-th data ($q=1,\: 2$). Hence the Hellinger
distance between the 2 datasets that we compute is only known
upto a constant normalisation $S$ that we use to scale both $u_{1}^{(t)}$ and
$u_{2}^{(t)}$, $\forall \; t=0,\ldots, N_{iter}$. 
\end{remark}

\begin{proposition}
Unknown normalisation $S$ that normalises $u_{1}^{(t)}$ and
$u_{2}^{(t)}$, is chosen to ensure that the scaled, log marginal of all graph edges in
the $t$-th iteration, is $\leq 0$, s.t.
$\exp\left(\frac{\ln(u_{m}^{(t)})}{s}\right)\in(0,1]$. 
Therefore we choose the global scale $S$ as: 
\begin{equation}
s:={\max}\{(\ln(u_{1}^{(0)}),
\ln(u_{1}^{(1)}),\ldots,\ln(u_{1}^{(N_{iter})}),
\ln(u_{2}^{(0)}),\ldots,\ln(u_{2}^{(N_{iter})})\}.
\label{eqn:es}
\end{equation} 
\end{proposition}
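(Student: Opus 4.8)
The plan is to read the statement as a consistency-plus-optimality claim about a multiplicative rescaling. The normalisation acts by replacing each un-normalised marginal $u_q^{(t)}$ by $u_q^{(t)}/S$, equivalently by shifting its logarithm, $\ln(u_q^{(t)}) \mapsto \ln(u_q^{(t)}) - s$ with $s := \ln S$; the requirement is that every rescaled value lie in $(0,1]$, i.e. that the shifted log-marginal be non-positive, for each dataset $q \in \{1,2\}$ and each iteration $t \in \{0,\ldots,N_{iter}\}$. First I would dispose of the lower bound: since each $u_q^{(t)}$ is the value of an (un-normalised) posterior density produced by the chain, it is strictly positive, so $u_q^{(t)}/S > 0$ automatically for any $S>0$, and no constraint on $s$ is needed to keep the rescaled quantity inside $(0,\cdot]$.

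The substance is therefore the single upper inequality. For a fixed pair $(q,t)$ I would record the chain of equivalences $u_q^{(t)}/S \le 1 \iff u_q^{(t)} \le S \iff \ln(u_q^{(t)}) \le s$, valid because $\ln$ is strictly increasing and each $u_q^{(t)}>0$. Imposing this for every index pair at once is just the conjunction of these inequalities, which holds if and only if $s$ is a common upper bound for the entire collection $\{\ln(u_q^{(t)}) : q \in \{1,2\},\ t=0,\ldots,N_{iter}\}$.

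Next I would invoke finiteness: the indices range over two datasets and finitely many sampled iterations, so the collection is finite and attains its maximum. Hence the set of admissible scales is exactly $\{s : s \ge \max_{q,t}\ln(u_q^{(t)})\}$, and the smallest (tightest) admissible choice — the one that deflates the marginals no more than necessary, preserving maximal dynamic range — is $s = \max_{q,t}\ln(u_q^{(t)})$, which is precisely the displayed expression. At this value the pair attaining the maximum is mapped to exactly $1$, so the interval endpoint in $(0,1]$ is saturated rather than strict, confirming the closed upper bound.

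I do not expect a genuine obstacle: the argument is the elementary observation that the maximum is the least upper bound of a finite set. The only two points requiring care, which I would flag explicitly, are (i) that the scaling is multiplicative, $u \mapsto u/S$, equivalently an additive shift of $\ln u$ by $s$, so that the displayed $\exp(\cdot)$ is to be read as $\exp\big(\ln(u_m^{(t)}) - s\big) = u_m^{(t)}/S$; and (ii) that strict positivity of each $u_q^{(t)}$ is exactly what makes the logarithms real and renders the lower bound automatic. Both are immediate from the construction of $u_q^{(t)}$ as an un-normalised posterior value.
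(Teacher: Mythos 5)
Your argument is internally correct, and since the paper states this proposition without any accompanying proof (it is offered as a definitional choice), the elementary observation you make --- that the maximum of the finite collection $\{\ln(u_q^{(t)}): q\in\{1,2\},\ t=0,\ldots,N_{iter}\}$ is its least upper bound, that strict positivity of the un-normalised posterior values makes the lower endpoint automatic, and that the maximiser saturates the upper endpoint --- is essentially the intended justification. However, there is one substantive mismatch you should not pass over as quickly as you do in your point (i). The paper's displayed scaling is $\exp\bigl(\ln(u_m^{(t)})/s\bigr)$, i.e.\ the power transform $\bigl(u_m^{(t)}\bigr)^{1/s}$, and this is genuinely the convention the paper computes with later (Section~\ref{sec:real_hell} uses $s\approx 142.77$ and explicitly plugs $\exp(\ln(u_m^{(t)})/s)$ into Equation~\ref{eqn:sq_hell}). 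Your chain of equivalences $u\le S \iff \ln u \le s$ proves the claim only for the \emph{subtractive} reading $\exp\bigl(\ln(u_m^{(t)})-s\bigr)=u_m^{(t)}/S$. Under the paper's literal convention the iterate attaining the maximum maps to $\exp(1)=e$, so the asserted containment in $(0,1]$ fails; what the choice of $s$ actually guarantees there is $\ln(u_m^{(t)})/s\le 1$, hence scaled values in $(0,e]$, and the phrase ``scaled log marginal $\le 0$'' would need to read ``$\le 1$''. Moreover the division convention is only inequality-preserving when $s>0$: if every log-marginal were negative, $s<0$ and dividing by $s$ reverses the order, so the construction would not even be monotone --- a failure mode your subtractive reading avoids entirely.

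So the honest summary is: you have produced a correct proof of an amended, coherent version of the proposition, and your amendment is arguably the right repair (it is what the surrounding prose about ``a constant normalisation $S$ that we use to scale both $u_1^{(t)}$ and $u_2^{(t)}$'' suggests). But it is not a proof of the paper's literal statement, which is false at the maximiser as written, and it silently changes the quantity that enters the downstream Hellinger and $D_{max,s}$ computations --- under your convention those numerical values (e.g.\ $D_H\approx 0.1153$) would come out differently. A complete review-quality answer should state this discrepancy as a correction to the proposition rather than as a parenthetical reading instruction.
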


\begin{remark}
\label{rem:hell}
{
Squared Hellinger distance $D_H^2(u_{1},u_{2})$ between discretised posterior
probability densities of 2 graphical models, computed using
$\exp(\ln(u_q^{(t)})/s)$ in Equation~\ref{eqn:sq_hell}, is affected by 
scaling parameter $S$. This scale dependence is mitigated in our definition of 
the distance between 2 graphical models
as the difference between the ratio of this computed 
${D_H(u_{1},u_{2})}$, to the scaled uncertainty inherent in one graphical
model, and the ratio of ${D_H(u_{1},u_{2})}$, to the scaled uncertainty
in the other learnt graphical model. Such scaled uncertainty in a learnt
graphical model is defined in Definition~\ref{defn:uncert}.}
\end{remark} 

\begin{definition}
\label{defn:uncert}
The scaled (by a scale parameter $S=s$) uncertainty in learnt graphical model
${\hat{\cal G}}_{\bN_q,{\bf D}_q}(\bV,0.05)$ of data set ${\bf D}_q$, with
edge-parameter matrix $\bN_q$, is defined as 
\begin{eqnarray}
D_{max, s}(q) & := & {\max}\{\exp(\ln(u_{q}^{(0)})/s),
\exp(\ln(u_{q}^{(1)})/s),\ldots,\exp(\ln(u_q^{(N_{iter})})/s)\}-\nonumber \\
&&{\min}\{\exp(\ln(u_{q}^{(0)})/s),\exp(\ln(u_{q}^{(1)})/s),\ldots,\exp(\ln(u_q^{(N_{iter})})/s)\},\nonumber \\
\label{eqn:graph_new}
\end{eqnarray}
Thus, $D_{max,s}(q)$ provides separation between the
maximal and minimal (scaled values of) posteriors of graphs, generated in the
MCMC chain run using the $q$-th dataset. Therefore $D_{max,s}(q)$ defines
uncertainty of the graphical model learnt for this dataset.
\end{definition}

\begin{definition}
\label{defn:1}
{
For edge-parameter matrix $\bN_q$, for dataset
${\bf D}_q$, $q=1,2$,
separation between the two corresponding graphical models on vertex set $\bV$, 
learnt with
uncertainty defined as in Definition~\ref{defn:uncert} is
\begin{eqnarray}
&&\delta(\displaystyle{{\hat{\cal G}}_{\bN_1,{\bf D}_1}(\bV,0.05)},
\displaystyle{{\hat{\cal G}}_{\bN_2,{\bf D}_2}(\bV,0.05)}) \nonumber\\
&:= & {\Big{\vert}}\sqrt{D_H^2(u_{1},u_{2})}/D_{max,s}(1)-\sqrt{D_H^2(u_{1},u_{2})}/D_{max,s}(2){\Big{\vert}}\nonumber\\
&= & D_H(u_{1},u_{2})\displaystyle{{\Big{\vert}}\frac{1}{D_{max,s}(1)}
  - \frac{1}{D_{max,s}(2)}{\Big{\vert}}},
\label{eqn:delta12}
\end{eqnarray}
 where
the Hellinger distance $D_H(u_{1},u_{2})$, between the 2 graphical models, is
defined in Equation~\ref{eqn:sq_hell} and $D_{max, s}(q)$ is the uncertainty
in the the graphical model for data ${\bf D}_q$, as defined in
Equation~\ref{eqn:graph_new}, computed at the chosen value $s$ of scale $S$ (defined in
Equation~\ref{eqn:es}).}
\end{definition}


Alternatively, we could define a (discretised version of the) odds ratio of unscaled logarithm of the unnormalised posterior densities of the graphical models learnt using MCMC, given the two datasets, as $\displaystyle{\int\left(\log(g(\bx))-{\log(h(\bx))}\right)d\bx }$; such is then a divergence measure that we define as
\begin{equation}
O_\pi(u_{1},u_{2}):= \displaystyle{
\sum\limits_{t=N_{burnin}+1}^{N_{iter}}
\left[{\log(u_{1}^{(t)}) -\log({u_{2}^{(t)}})}\right]
}.
\label{eqn:div}
\end{equation}

\subsection{Suggested inter-graph separation $\delta(\cdot,\cdot)$, is an inter-graph distance}
\label{sec:itisdist}
\begin{theorem}
\label{theo:distmet}
{
Let $\delta(\displaystyle{{\hat{\cal G}}_{\bN_1,{\bf D}_1}(\bV,0.05)},
\displaystyle{{\hat{\cal G}}_{\bN_2,{\bf D}_2}(\bV,0.05)})$ be the separation 
defined as in Equation~\ref{eqn:delta12}, between 2 uncertainty-included graphical models, defined over vertex set
$\bV$, learnt for datasets ${\bf D}_1$ and ${\bf D}_2$. Here the graphical model
$\displaystyle{{\hat{\cal G}}_{\bN_q,{\bf D}_q}(\bV,0.05)}$ is an element of space
${\bOmega}$, $q=1,2$.  

Then our definition of this inter-graph separation $\delta:{\bOmega}\times {\bOmega}\longrightarrow{\mathbb
  R}_{\geq 0}$, is a distance function, or a metric.}
\end{theorem}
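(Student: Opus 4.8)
The plan is to verify the four defining properties of a metric for $\delta$ directly from its definition in Equation~\ref{eqn:delta12}. Writing $a_q := 1/D_{max,s}(q)$ for the reciprocal of the uncertainty of the $q$-th learnt graphical model (Definition~\ref{defn:uncert}), the separation takes the compact form $\delta(\hat{\cal G}_1,\hat{\cal G}_2) = D_H(u_1,u_2)\,|a_1-a_2|$. Two of the axioms are then immediate: non-negativity follows because $D_H(u_1,u_2)\geq 0$ (it is the square root of the non-negative sum appearing in Equation~\ref{eqn:sq_hell}) and $|a_1-a_2|\geq 0$; symmetry follows because both factors are manifestly invariant under interchange of the two graphs, since $D_H(u_1,u_2)=D_H(u_2,u_1)$ and $|a_1-a_2|=|a_2-a_1|$.

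For the identity of indiscernibles I would argue both directions separately. If $\hat{\cal G}_1=\hat{\cal G}_2$ then the two discretised posterior sequences coincide, $u_1^{(t)}=u_2^{(t)}$ for every $t$, so each summand in Equation~\ref{eqn:sq_hell} vanishes and $D_H(u_1,u_2)=0$; moreover $D_{max,s}(1)=D_{max,s}(2)$ by Equation~\ref{eqn:graph_new}, whence $|a_1-a_2|=0$ as well, and so $\delta=0$. For the converse I would use the standard fact that the Hellinger distance between two (discretised) densities is zero only when the sequences agree, together with the understanding that a graphical model in $\bOmega$ is identified by its normalised posterior sequence. I expect the delicate point here to be that the product also vanishes when $a_1=a_2$ while $u_1\neq u_2$; I would resolve this by taking $\bOmega$ modulo equality of the normalised posterior sequence, or by invoking the modelling premise that distinct graphs over the common vertex set $\bV$ carry distinct posteriors, so that $D_H(u_1,u_2)>0$ whenever the graphs differ.

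The triangle inequality is the step I expect to be the main obstacle. The two natural building blocks are already available: $D_H$ is itself a metric on densities, giving $D_H(u_1,u_3)\leq D_H(u_1,u_2)+D_H(u_2,u_3)$, and the reciprocal uncertainties obey the ordinary real triangle inequality $|a_1-a_3|\leq|a_1-a_2|+|a_2-a_3|$. The trouble is that $\delta$ is the \emph{product} of these two quantities, and a product of two metrics is not generally subadditive, so multiplying the two inequalities does not on its own deliver $\delta(\hat{\cal G}_1,\hat{\cal G}_3)\leq\delta(\hat{\cal G}_1,\hat{\cal G}_2)+\delta(\hat{\cal G}_2,\hat{\cal G}_3)$. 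My strategy would therefore be to exploit the coupling that the definitions impose between $u_q$ and $D_{max,s}(q)$ — both are extracted from the same MCMC chain — to obtain monotone control of the uncertainty factor $|a_1-a_3|$ by the Hellinger factors passing through the intermediate graph, and to bound the cross term that obstructs subadditivity. If an unconditional bound does not emerge, I would isolate the precise regularity condition on $\bOmega$ (for instance an ordering or Lipschitz-type relation tying the uncertainties $D_{max,s}(q)$ to the pairwise Hellinger distances) under which subadditivity of the product holds, and prove the inequality on that restricted class. Assembling non-negativity, symmetry, the identity of indiscernibles and this triangle inequality then establishes that $\delta$ is a metric on $\bOmega$.
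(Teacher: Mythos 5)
Your reduction $\delta(\hat{\cal G}_1,\hat{\cal G}_2)=D_H(u_1,u_2)\,\vert a_1-a_2\vert$ with $a_q:=1/D_{max,s}(q)$ is correct, and non-negativity and symmetry are handled adequately. But the submission is not a proof: the two axioms on which the theorem actually turns --- identity of indiscernibles and the triangle inequality --- are exactly the ones you leave as contingency plans (``I would resolve this by\ldots'', ``if an unconditional bound does not emerge, I would isolate the precise regularity condition\ldots''). The paper's own argument lives in Section~3 of its Supplementary Material, i.e.\ precisely in the territory your proposal stops short of entering; deferring the subadditivity step to an unspecified ``coupling'' between $u_q$ and $D_{max,s}(q)$, or to an unnamed restriction on $\bOmega$, is a statement of intent, not a demonstration, and so the proposal has a genuine gap at its core.

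Moreover, the obstruction you sense is real and can be made concrete, which shows your fallback strategy of seeking an unconditional bound cannot succeed on the raw space of posterior sequences. Note that $a_q$ depends only on the extremes $\max_t$ and $\min_t$ of the scaled sequence $\{\exp(\ln(u_q^{(t)})/s)\}$, while $D_H$ depends on the whole sequence termwise. Take $u_2$ to be a nontrivial permutation (in $t$) of $u_1$: then $D_{max,s}(1)=D_{max,s}(2)$, so $a_1=a_2$ and $\delta(\hat{\cal G}_1,\hat{\cal G}_2)=0$ although $u_1\neq u_2$ --- so $\delta$ is at best a pseudometric without the quotient you mention. Worse, with $a_1=a_2\neq a_3$ the triangle inequality $D_H(u_1,u_3)\vert a_1-a_3\vert\leq D_H(u_1,u_2)\cdot 0 + D_H(u_2,u_3)\vert a_2-a_3\vert$ reduces to $D_H(u_1,u_3)\leq D_H(u_2,u_3)$, which the Hellinger triangle inequality does not supply whenever $D_H(u_1,u_2)>0$; choosing $u_3$ to be a small perturbation of $u_2$ violates it outright. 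Hence any valid proof must build in, explicitly and from the start, an identification of models with equal $\delta$-separation zero (or an identifiability premise on $\bOmega$) together with a stated condition tying the uncertainties to the Hellinger factors --- the very hypotheses your proposal gestures at but never formulates, which is why it does not establish the theorem.
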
 
The proof of this theorem is provided in Section~3 of the attached
Supplementary Materials.

\subsection{Absolute correlation between 2 multivariate datasets, from distance
between their graphical models}
\noindent
In this section, we introduce a model for the absolute correlation between 2
multivariate datasets, for which the uncertainty-included graphical models are
learnt, allowing for the inter-graph distance $\delta(\cdot,\cdot)$ to be
computed.

\begin{proposition}
  \label{prop:2} { For a given value of the inter-graph distance
    $\delta(\displaystyle{{\cal G}_{u,1}}, \displaystyle{{\cal
        G}_{u,2}})\in[0,\infty)$, between 2 learnt graphical models
    $\displaystyle{{\cal G}_{u,2}}\displaystyle{{\cal
        G}_{u,1}}\in{\bOmega}$, defined over vertex set $\{1,\ldots,p\}$,
    where the graphical model $\displaystyle{{\cal G}_{u,\cdot}}$ is learnt given data
    ${\bf D}_\cdot$, a 
    model for the absolute value of the correlation $\vert corr(\bZ_1,\bZ_2)\vert $ between the $p$-dimensional vector-valued observable
    $\bZ_1$, ($n_1$ measurements of which comprise dataset indexed by 1), and the
    $p$-dimensional observable $\bZ_2$, ($n_2$ measurements of which comprise
    dataset indexed by 2), is
$$\delta(\displaystyle{{\cal G}_{u,1}}, \displaystyle{{\cal
        G}_{u,2}}) = -\log\left(\vert corr(\bZ_1,\bZ_2)\vert\right),$$   
$$s.t.\:\: \vert corr(\bZ_1,\bZ_2)\vert = \exp[-\delta(\displaystyle{{\cal G}_{u,1}},
\displaystyle{{\cal G}_{u,2}})]\in(0,1].$$ 
}
\end{proposition}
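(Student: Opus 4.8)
The plan is to verify that the proposed relation $\delta = -\log(\vert corr(\bZ_1,\bZ_2)\vert)$ constitutes a legitimate model, in the sense that the correlation magnitude it assigns is always an admissible one and that the two endpoints of the scale carry a consistent interpretation. The mathematical content here is light: the substantive work is an algebraic inversion together with a range-and-monotonicity check on the exponential map, while the justification of the model rests on the metric properties of $\delta$ established in Theorem~\ref{theo:distmet}.

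First I would recall that, by Theorem~\ref{theo:distmet}, $\delta$ is a metric on $\bOmega$, so that $\delta({\cal G}_{u,1},{\cal G}_{u,2}) \geq 0$ for every pair of learnt graphical models; this secures the domain $\delta \in [0,\infty)$ asserted in the statement. Next I would invert the defining relation algebraically: exponentiating $\delta = -\log(\vert corr(\bZ_1,\bZ_2)\vert)$ yields $\vert corr(\bZ_1,\bZ_2)\vert = \exp(-\delta)$, which is the claimed closed form.

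The key step is then to confirm that $\exp(-\delta)$ takes values in $(0,1]$ whenever $\delta\in[0,\infty)$. Since $\delta\mapsto\exp(-\delta)$ is continuous and strictly decreasing, with $\exp(-0)=1$ and $\lim_{\delta\to\infty}\exp(-\delta)=0$, it is a bijection of $[0,\infty)$ onto $(0,1]$; hence the assigned correlation magnitude always lies in $(0,1]\subseteq[0,1]$ and is therefore admissible as an absolute correlation. I would then read off the two endpoints for interpretive consistency: $\delta=0$ forces $\vert corr\vert=1$, which by the identity-of-indiscernibles axiom underlying the metric corresponds to coincident graphical models and hence to maximally correlated datasets; while $\delta\to\infty$ drives $\vert corr\vert\to 0$, corresponding to arbitrarily dissimilar graphical models and thus to effectively independent datasets. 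The strict monotonicity guarantees that increasing inter-graph distance always translates into decreasing absolute correlation, as the model requires.

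The main obstacle is conceptual rather than technical: there is no canonical derivation compelling $-\log\vert corr\vert$ to equal $\delta$, since any strictly decreasing bijection $[0,\infty)\to(0,1]$ would respect the same boundary behaviour. The genuine content of the statement is therefore the choice of the exponential link together with the verification that this choice is self-consistent. I would accordingly frame the argument as establishing admissibility and endpoint consistency of the model, leaning on Theorem~\ref{theo:distmet} to fix the domain $[0,\infty)$, rather than attempting to derive the functional form from first principles.
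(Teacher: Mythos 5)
The paper supplies no proof for Proposition~\ref{prop:2}: it is presented as a modelling postulate, with the only implicit mathematical content being that the postulated map is well defined and self-consistent. Your argument is correct and supplies exactly that content — nonnegativity of $\delta$ via Theorem~\ref{theo:distmet}, the algebraic inversion $\vert corr(\bZ_1,\bZ_2)\vert = \exp(-\delta)$, and the observation that $\delta\mapsto\exp(-\delta)$ is a continuous, strictly decreasing bijection of $[0,\infty)$ onto $(0,1]$, so the assigned correlation magnitude is always admissible and decreases monotonically in the inter-graph distance. You are also right on the conceptual point: the exponential link is a choice, not a derivable consequence, since any strictly decreasing bijection of $[0,\infty)$ onto $(0,1]$ would respect the same boundary behaviour; framing the proof as an admissibility-and-endpoint-consistency check is the appropriate response, and it matches the (tacit) role the proposition plays in the paper, where the model is simply invoked in Section~\ref{sec:real_hell} to convert $\delta(white,red)\approx 0.44$ into $\vert corr\vert\approx 0.1030$.

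One caution on your endpoint reading. From Equation~\ref{eqn:delta12}, $\delta = D_H(u_{1},u_{2})\,{\big\vert} 1/D_{max,s}(1) - 1/D_{max,s}(2){\big\vert}$, which vanishes whenever the two learnt models happen to have equal scaled uncertainties $D_{max,s}(1)=D_{max,s}(2)$, even if $D_H(u_{1},u_{2})>0$. So, notwithstanding the metric claim of Theorem~\ref{theo:distmet} (whose proof is relegated to the supplement), you should not gloss $\delta=0$ as ``coincident graphical models'' via the identity-of-indiscernibles axiom; it is safer to state only that $\delta=0$ maps to $\vert corr(\bZ_1,\bZ_2)\vert=1$ under the model. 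This caveat does not affect your range check, which is the substantive step and stands as written.
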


\section{Changes undertaken to facilitate the learning of large networks}
\label{sec:net}
\noindent
When our interest is in learning a graphical model on a vertex set of
cardinality $p\gtrsim$20, it implies that such learning, if it is to be
undertaken according to the methodology described in the previous section,
will demand MCMC-based inference on the $\gtrsim$200 distinct off-diagonal elements of the
correlation matrix $\bSigma_{C}^{(S)}=[s_{ij}]$ (where
$S_{ij}$ represents correlation between the $X_i$ and $X_j$ variables in the
dataset); MCMC-based learning of more than about 200 parameters is difficult.
Again, for $p\gtrsim 500$, Cholesky decomposition of the $p\times
p$-dimensional inter-column correlation matrix $\bSigma_C^{(S)}$, (leading to
its inversion for example) is not easy (i.e. it is a challenge to achieve
numerical robustness as the matrix dimensionality exceeds about $500\times 500$). This
renders computation of the likelihood in Theorem~\ref{theo:marg} difficult, and
the numerical computation of the precision matrix $(\bSigma_C^{(S)})^{-1}=\bPsi=[\psi_{ij}]$ is also
difficult for $p\gtrsim 500$, where $\psi_{ij}$ is employed to compute the
partial correlation matrix $\bR$ according to Equation~\ref{eqn:6}.

\begin{remark}
When learning a network with $\gtrsim 500$ vertices as an SRGG
drawn in a probabilistic metric space, on vertex set $\bV$ with cardinality
$p$, with a cut-off on the affinity measure ($\equiv$ edge marginals) of $\tau$, we 
learn the SRGG given the correlation matrix $\bSigma_C^{(S)}$ than the partial
correlation matrix $\bR$ (of
the given data ${\bf D}_S$ that hosts $n$ standardised measurements of each of
the $p\gtrsim 500$ r.v.s), since it is hard to compute inverse
$(\bSigma_C^{(S)})^{-1}$ of the large $(\bSigma_C^{(S)})^{(p\times p)}$, to
thereby compute $\bR$. 

Thus, the network is learnt as the SRGG ${\cal G}_{m, \bSigma_C^{(S)}}(\bV,\tau)$.
\end{remark}

\begin{remark}
When learning a network with $\gtrsim 500$ vertices given data ${\bf D}_S$
that hosts $n$ standardised measurements of each of
$p\gtrsim 500$ r.v.s $X_1,\ldots,X_p$, we \
---eschew
MCMC-based inference on the large $(\bSigma_C^{(S)})^{(p\times p)}$
inter-column correlation matrix, and \\
---employ empirical estimate of $s_{ij}$
instead, where $\bSigma_C^{(S)}) = [s_{ij}]$ with $s_{ij}:=
\displaystyle{\frac{\sum\limits_{k=1}^n x_{ik} x_{jk}}{n} -
  \frac{\sum\limits_{k=1}^n x_{ik}}{n}\frac{\sum\limits_{k=1}^n
    x_{jk}}{n}}$. Here, $k$-th measured value of $X_i$ is $x_{ik}$, $i=1,\ldots,p;\:k=1,\ldots,n$. 

Hence in the notation of the network learnt as SRGG ${\cal G}_{m,
  \bSigma_C^{(S)}}(\bV,\tau)$, correlation matrix has no dependence on
any iteration index.
\end{remark}

\begin{remark}
  When learning the graphical model of a given dataset ${\bf D}_S$
  that hosts $n$ standardised measurements of each of $p\gtrsim 500$ r.v.s $X_1,\ldots,X_p$, we
  eschew MCMC-based inference on an SRGG in every iteration. The sought
  graphical model is learnt as the network ${\cal G}_{m,
  \bSigma_C^{(S)}}(\bV,\tau)$ which is itself an SRGG with connection function,
or the affinity measure between the $i$-th and $j$-th nodes, given by 
the marginal posterior of
$G_{ij}$, given correlation $S_{ij}=s_{ij}$ between $X_i$ and $X_j$, i.e. by: 
$$m(G_{ij}\vert S_{ij}).$$ 
\end{remark}
Indeed, as the MCMC-based inference in not relevant any more, there is only a
single value of the marginal posterior $m(G_{ij}\vert S_{ij})$ of the edge
parameter $G_{ij}$, between the $i$-th and $j$-th nodes, (given the
correlation $S_{ij}$).  So we do not require to define the connection function
in terms of (a sample estimate of) the expected value of the marginal.

\begin{remark}
Graphical model of dataset with inter-column correlation matrix $\bSigma_C$, on vertex set $\bV$, with cutoff probability $\tau$, 
is learnt as the network ${\cal G}_{m,\bSigma_C}(\bV,\tau)$ with one single 
identified connection function or affinity function $m(\cdot\vert\cdot)$. 

Thus, we learn a network as an SRGG without uncertainties.
\end{remark}

\subsection{Inter-network distance}
\noindent
However, in Definition~\ref{defn:1}, distance between graphical models learnt
of a pair of datasets, is defined as the Hellinger distance normalised by the
uncertainty in the learning of each graphical model. So in the absence of
uncertainty in learning the network as an SRGG, how can we define an
inter-network distance? In fact, the very discretised representation of the
Hellinger distance between the marginal posteriors of the two graphs, over the
MCMC iterations, (see
Equation~\ref{eqn:sq_hell}), stands challenged, when only one marginal
value of the SRGG is computed for each given dataset.

\begin{proposition}
For vertex set $\bV=\{X_1,\ldots,X_p\}$, 
distance $\Delta(\cdot,\cdot)$ between network ${\cal G}_{m, \bSigma_C^{(1)}}(\bV,\tau)$. given 
a dataset with inter-column correlation matrix $\bSigma_C^{(1)}$, and the
network ${\cal G}_{m, \bSigma_C^{(2)}}(\bV,\tau)$ learnt given 
dataset with inter-column correlation matrix $\bSigma_C^{(2)}$, is defined as 
the (discretised) Hellinger distance between the edge marginals of each
network, given the respective inter-node correlation structure, i.e. as
\begin{equation}
\Delta(\displaystyle{{{\cal G}}_{m, \bSigma_C^{(1)}}(\bV,\tau)},
\displaystyle{{{\cal G}}_{m, \bSigma_C^{(2)}}(\bV,\tau)}) 
:= D_H(u_{1},u_{2}), \nonumber\\
\label{eqn:delta13}
\end{equation}
where for the $q$-th dataset, ($q=1,2$), the marginal
posterior of the $ij$-th edge parameter $G_{ij}^{(q)}$ given the $ij$-th correlation
parameter $S_{ij}^{(q)}$ is $m(G_{ij}^{(q)}\vert
S_{ij}^{(q)})$, for $i> j; i,j\in\{1,2,\ldots,p\}$, s.t.
$$u_{q}^{(ij)}:=m(G_{ij}^{(q)}\vert S_{ij}^{(q)}),$$
which is employed to compute square of the (discretised version of the) Hellinger
distance between the two datasets as
\begin{equation}
D_H^2(u_{1},u_{2}) = \displaystyle{\frac{
\sum\limits_{i=1}^{p-1} \sum\limits_{j=i+1}^{p} \left(\sqrt{u_{1}^{(ij)}} - \sqrt{u_{2}^{(ij)}}\right)^2}{p(p-1)/2}},
\label{eqn:sq_hell2}
\end{equation}
\end{proposition}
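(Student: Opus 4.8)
The plan is to recognise that the discretised squared Hellinger distance in Equation~\ref{eqn:sq_hell2} is, up to a fixed positive scale, the squared Euclidean distance between the vectors of element-wise square-rooted edge marginals, and to then inherit the metric axioms from those of the Euclidean norm. Concretely, for the network learnt given data with inter-column correlation $\bSigma_C^{(q)}$ I would collect the $p(p-1)/2$ edge marginals into a vector
$$\bv_q := \left(\sqrt{u_q^{(ij)}}\right)_{1\leq i<j\leq p}\in\mathbb{R}^{p(p-1)/2},\qquad q=1,2,$$
which is well defined and real-valued because each $u_q^{(ij)}=m(G_{ij}^{(q)}\vert S_{ij}^{(q)})$ is non-negative: it is the value of the marginal posterior of Equation~\ref{eqn:marg_final}, itself the integral over $\upsilon\in[0,1]$ of a strictly positive integrand. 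With this notation Equation~\ref{eqn:sq_hell2} reads $D_H^2(u_{1},u_{2})=c^2\,\norm{\bv_1-\bv_2}^2$ with the constant $c:=(p(p-1)/2)^{-1/2}>0$, so that $\Delta(\cdot,\cdot)=D_H(u_{1},u_{2})=c\,\norm{\bv_1-\bv_2}$.

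Next I would verify the metric axioms for $\Delta$ acting on the space $\bOmega$ of networks of the form ${\cal G}_{m,\bSigma_C}(\bV,\tau)$. Non-negativity is immediate, since $\Delta$ is a non-negative multiple of a norm. Symmetry follows because each summand $(\sqrt{u_{1}^{(ij)}}-\sqrt{u_{2}^{(ij)}})^2$ is invariant under interchanging the indices $1\leftrightarrow 2$, leaving $\Delta$ unchanged. The triangle inequality is obtained by applying the Euclidean triangle inequality to three networks with square-root vectors $\bv_1,\bv_2,\bv_3$, giving $\norm{\bv_1-\bv_3}\leq\norm{\bv_1-\bv_2}+\norm{\bv_2-\bv_3}$, and multiplying through by $c>0$ to conclude $\Delta({\cal G}_1,{\cal G}_3)\leq\Delta({\cal G}_1,{\cal G}_2)+\Delta({\cal G}_2,{\cal G}_3)$.

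The step I expect to carry the real content is the identity of indiscernibles, $\Delta({\cal G}_1,{\cal G}_2)=0\Longleftrightarrow{\cal G}_1={\cal G}_2$ in $\bOmega$. Since $c>0$ and $\norm{\cdot}$ is a norm, $\Delta=0$ forces $\bv_1=\bv_2$; because the scalar square-root map is injective on $[0,\infty)$, this is equivalent to $u_{1}^{(ij)}=u_{2}^{(ij)}$ for every pair $i<j$, i.e.\ to the two networks sharing identical edge marginals $m(G_{ij}\vert S_{ij})$. The delicate point is that in this large-network regime a network ${\cal G}_{m,\bSigma_C}(\bV,\tau)$ carries no MCMC-induced uncertainty and is, at fixed vertex set $\bV$ and fixed cut-off $\tau$, completely specified by its vector of edge marginals, the edge set of Definition~\ref{defn:edgeset} being a deterministic thresholding of these marginals at $\tau$. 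I would therefore argue that the correct identification is to regard $\bOmega$ as the space of these marginal-vector representatives, so that equality of all edge marginals is precisely equality of the two elements of $\bOmega$; this secures identity of indiscernibles and is the natural analogue, in the uncertainty-free setting, of the argument used for $\delta$ in Theorem~\ref{theo:distmet}.
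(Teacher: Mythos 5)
The paper offers no proof of this proposition at all: as the wording ``is defined as'' signals, it functions purely as a \emph{definition} of the inter-network distance $\Delta$ for the large-network regime, and the metric axioms are never verified for $\Delta$ anywhere in the paper --- the only metricity result actually proved is Theorem~\ref{theo:distmet}, for the uncertainty-normalised separation $\delta$, in Section~3 of the Supplementary Materials. Your proposal therefore does not shadow an existing argument; it supplies a verification the paper omits, and it is correct. The componentwise square-root embedding $u_q\mapsto\bv_q\in{\mathbb R}^{p(p-1)/2}$, with $\Delta=c\,\norm{\bv_1-\bv_2}$ for $c=(p(p-1)/2)^{-1/2}$, is the standard way to see that the discretised Hellinger expression of Equation~\ref{eqn:sq_hell2} inherits non-negativity, symmetry and the triangle inequality from the Euclidean norm, and your observation that each $u_q^{(ij)}$ is non-negative (being the integral over $\upsilon\in[0,1]$ of a positive Gaussian integrand, per Equation~\ref{eqn:marg_final}) is exactly what makes $\bv_q$ well defined. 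Your handling of identity of indiscernibles is also the honest one, and is worth stressing as a point the paper glosses over: since $m(G_{ij}\vert S_{ij})$ depends on $s_{ij}$ only through $\vert G_{ij}-\vert s_{ij}\vert\vert$, two correlation matrices differing only in the signs of off-diagonal entries yield identical marginal vectors, so on networks labelled by $\bSigma_C^{(q)}$ the quantity $\Delta$ is a priori only a pseudometric; it becomes a genuine metric precisely under your quotient identification of $\bOmega$ with marginal-vector representatives, from which the edge set follows deterministically by thresholding at $\tau$. In short, what your route buys is an actual justification of the word ``distance'' in the proposition, whereas the paper's definitional stance leaves that claim implicit, resting by analogy on the separate (and structurally different, uncertainty-normalised) metric result for $\delta$.
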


The cut-off probability on this marginal posterior is $\tau$ in the 
network learnt as SRGG ${\cal G}_{m, \bSigma_C^{(S)}}(\bV,\tau)$. Depending on
the network at hand, we may decide on the value of $\tau$; for example, in the
human disease-disease network that we learn in Section~\ref{sec:disease}, we
produce the network using $\tau=0.1$.

\section{Implementation on real data}
\label{sec:real}
\noindent
\begin{sloppypar}
{In this section we discuss applications of our method to datasets on 12 
vino-chemical attributes of two samples of 1599 red and 4898 white wines, grown in the Minho region of Portugal
(referred to a ``vinho verde''); these data have been considered by
\ctn{CorCer09} and discussed in
\url{https://onlinecourses.science.psu.edu/stat857/node/223} (hereon PSU).
Each dataset consists of 12 columns that bear information on 
attributes that are assigned the following
names: ``fixed acidity'' ($X_1$), ``volatile acidity'' ($X_2$), ``citric
acid'' ($X_3$), ``residual sugar'' ($X_4$), ``chlorides'' ($X_5$), ``free
sulphur dioxide'' ($X_6$), ``total sulphur dioxide'' ($X_7$), ``density''
($X_8$), ``pH'' ($X_9$), ``sulphates'' ($X_{10}$), ``alcohol'' ($X_{11}$) and
``quality'' ($X_{12}$). Then the $n$-th row and $i$-th column of the data
matrix carries measured/assigned value of the $i$-th property of the $n$-th
wine in the sample, where $i=1,\ldots,12$ and $n=1,\ldots,n_{orig}=1599$ for
the red wine data ${\bf D}^{(red)}_{orig}$, while $n=1,\ldots,n_{orig}=4898$
for the white wine data ${\bf D}^{(white)}_{orig}$. We refer to the $i$-th
vinous property to be $X_i$. Then $X_i\in{\mathbb R}_{\geq 0}$ $\forall
i=1,\ldots,11$, while $X_{12}$ that denotes the perceived ``quality'' of the
wine is a categorical variable. Each wine in these samples was assessed by at
least three experts who graded the wine on a categorical scale of 0 to 10, in
increasing order of excellence. The resulting ``sensory score'' or value of
the ``quality'' parameter was a median of the expert assessments
\ctp{CorCer09}. We seek the graphical model given each of the wine data sets,
in which the relationship between any $X_i$ and $X_j$ is embodied, $i\neq j;\;
i,j=1,\ldots, 12$. Thus, we seek to find out how the different vino-chemical
attributes affect each other, as well as the quality of the wine, in the
data at hand. Here, $X_1,\ldots,X_{11}$ are real-valued, while $X_{12}$ is a categorical
variable, and our methodology allows for the learning of the graphical model
of a data set that in its raw state bears measurements of variables of
different types. In fact, we standardise our data, s.t. $X_i$ is standardised
to $Z_i$, $i=1,\ldots,p$, $p=12$. We work with only a subset data set,
(comprising only $n<n_{orig}$ rows of the available ${\bf
  D}^{(\cdot)}_{orig}$; $n=300$ typically). Thus, the data sets with $n$ rows,
containing $Z_i$ values, ($i=1,\ldots,p=12$), are $n\times p$-dimensional
matrices each; we refer to these data sets that we work with, as ${\bf
  D}^{(white)}_S$ and ${\bf D}^{(red)}_S$, respectively for the white and red
wines. Our aim is to learn the between-column correlation matrix
$\bSigma_S^{(m)}$ given data ${\bf D}^{(m)}_S$, and simultaneously
learn the graphical model of this data using MCMC-based inference within the methodology that we
discuss above, to then compute the inter-graph distance, and the inter-data
correlation thereafter; $m=white,\:red$.}
\end{sloppypar}

The motivation behind choosing these data sets are basically
three-fold. Firstly, we sought multivariate, rectangularly-shaped, real-life
data, that would admit graphical modelling of the correlations between the
different variables in the data. Also, we wanted to work with data, results
from -- at least a part of -- which exists in the literature. Comparison of
these published results, with our independent results then illustrates
strengths of our method. Thirdly, treating the red and white wine data as data
realised at different experimental conditions, we would want to address the
question of the distance between these data, and we propose to do this by
computing the distance between the graphical models of the two data
sets. Hence our choice of the popular Portuguese red and white wine data sets,
as the data that we implement to illustrate our method on. It is to be noted
that a rigorous vinaceous implications of the results, is outside the scope
and intent of this paper. However, we will make a comparison of our results
with the results of the analysis of white wine data that is reported in 
PSU,
though literature precludes analysis of the red wine data.

\subsection{Results given data ${\bf D}^{(white)}_S$}
\noindent
\begin{figure}[!t]
\centering
\includegraphics[width=10cm,height=8cm]{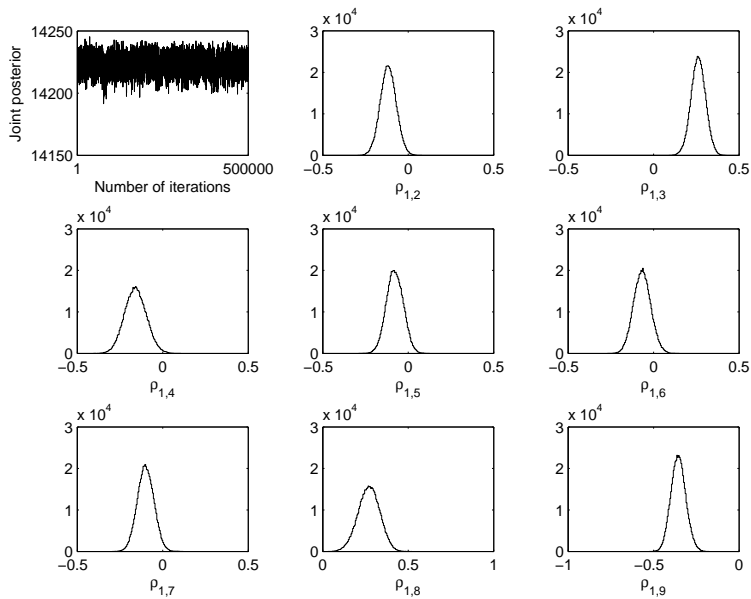}
\vspace{.0cm}
\caption{Figure to demonstrate convergence of the MCMC chain that we run with
  the white wine data. {\it Top left panel:} trace of the joint posterior
  probability density of the elements of the upper triangle of the
  between-columns correlation matrix of the standardised version of
  the real data ${\bf D}_S^{(white)}$ on Portuguese white wine samples
  \ctp{CorCer09}; this data has $n=300$ rows nd $p=12$ columns, and is
  constructed as a randomly sampled subset of the original data, the
  sample size of which is 4898. {\it All other panels:} histogram representations of marginal posterior probability densities of some of the partial correlation parameters computed using the correlation matrix learnt given data ${\bf D}_S^{(white)}$.
} 
\label{fig:white_corr}
\end{figure}

\begin{figure}[!t]
{
\hspace*{0in}
\includegraphics[width=10cm,height=8cm]{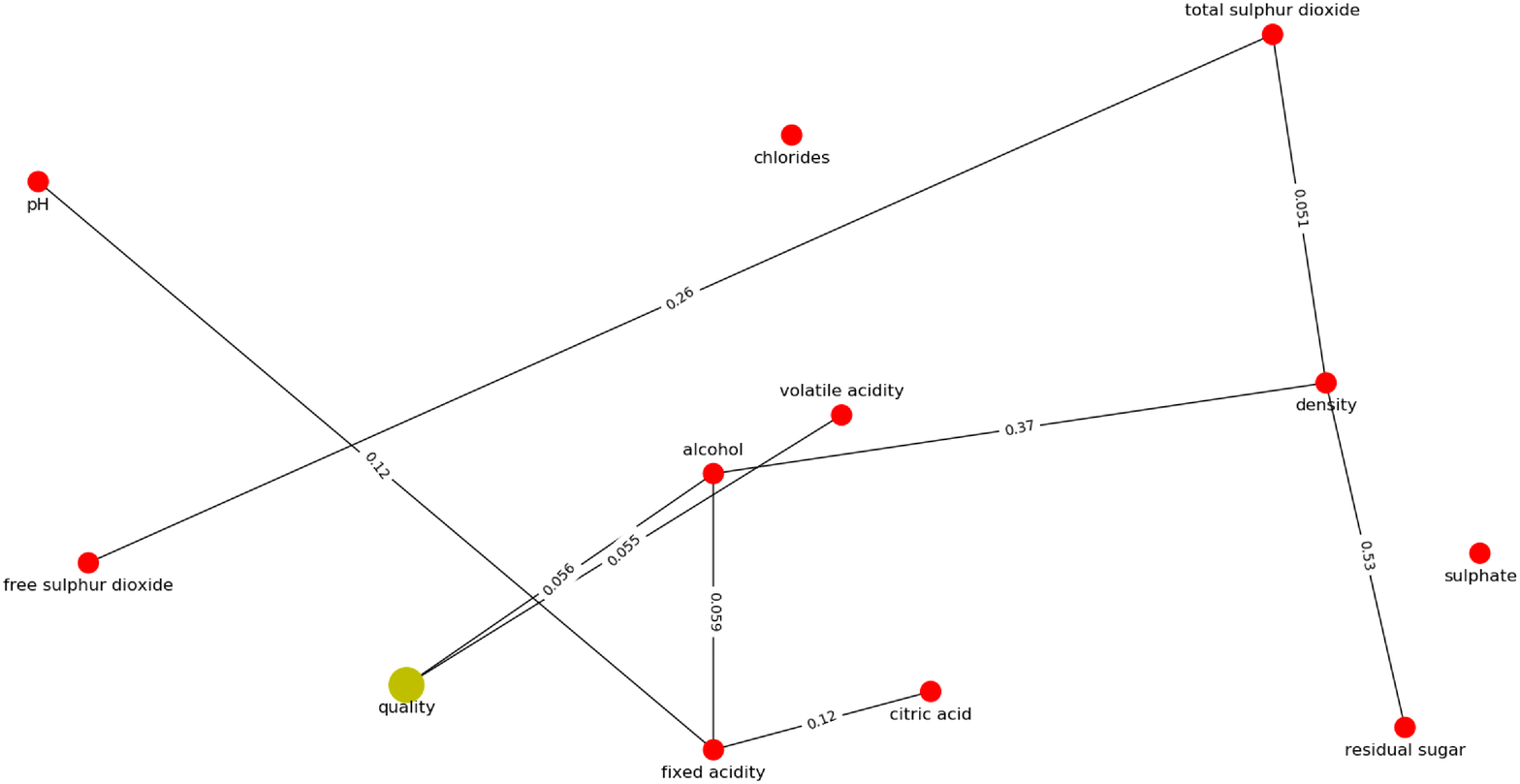}
}
\vspace*{-.05in}
\caption{Figure showing graphical model of ${\bf
    D}_S^{(white)}$, of the real data on Portuguese white wine samples
  \ctp{CorCer09}. The nodes have been placed randomly inside a box. Each of the first 11 columns of this data gives the measured
  value of each of 11 different vino-chemical properties of the wines in the
  sample -- marked as nodes in the graph above, by filled red (or grey in the
  printed version) circles, with the name of the property included in the
  vicinity of the respective node. The 12-th column in the data includes
  values of the assessed quality of a wine in the sample, (a node that we mark
  with a green circle in the electronic version; the bigger grey circle in a
  monochromatic version of the paper). The estimate of the probability for an edge to exist in
  the post-burnin sample of graphs generated in our MCMC-based inferential
  scheme, is marked against an existing edge, where edges with such
  probabilities that are $< 0.05$ are omitted from this graphical model, (see
  Section~\ref{sec:95}).}
\label{fig:white_graph}
\vspace*{-.1in}
\end{figure}

Figure~\ref{fig:white_corr} presents results that demonstrate convergence of the
MCMC chain run to learn the correlation structure and graphical model given
the white wine data ${\bf D}^{(white)}_S$. Its top left-hand panel displays trace of
the joint posterior probability density of all learnt inter-column correlation
($S_{ij}$) parameters of this data while marginal marginal posterior
probabilities of some of the partial correlation parameters, are presented as
histograms in the other panels. Having learnt the correlation structure, the
SRGG given the partial correlation matrix updated in an iteration, is then
learnt. Figure~9 in Supplementary Materials presents traces of the
of some of the edge ($G_{ij}$) and variance ($\upsilon_{ij}$) parameters of
this SRGG learning. Then at the end of the chain, using the learnt SRGGs, 
graphical model of this data is constructed; this is presented in
Figure~\ref{fig:white_graph}.

\subsubsection{Comparing against earlier work done with white wine data}
\label{sec:comparewhite}
\noindent   
The graphical model of the white wine data presented in
Fig~\ref{fig:white_graph}, is strongly corroborated by the simple empirical
correlations between pairs of different vino-chemical properties that is
noticed in the ``scatterplot of the
predictors'' included as part of the results of the ``Exploratory Data
Analysis'' reported in
\url{https://onlinecourses.science.psu.edu/stat857/node/224} on the white wine
data. These reported results use the
full white wine data set ${\bf D}^{(white)}_{orig}$, to construct a
matrix of scatterplots of $X_i$ against $X_j$; $i\neq j;\:
i,j=1,\ldots,11$. These empirical scatterplots
visually suggest stronger correlations between fixed acidity and pH;
residual sugar and density; free sulphur dioxide and total sulphur dioxide;
density and total sulphur dioxide; density and alcohol--than amongst other
pairs of variables. These are the very node pairs that we identify to have
edges (at probability in excess of 0.05) between them.

When we compare our learnt graphical model with the results of this
reported ``Exploratory Data Analysis'', we remind ourselves that
partial correlation (that drives the probability of the edge between
the $i$-th and $j$-th nodes), is often smaller than the correlation
between the $i$-th and $j$-th variables, computed before the effect of
a third variable has been removed \ctp{sheskin}. If this is the
case, then an edge between nodes $i$ and $j$ in the learnt graphical
model, is indicative of a high correlation between the $i$-th and
$j$-th variables in the data. However, in the presence of a suppressor
variable (that may share a high correlation with the $i$-th variable,
but low correlation with the $j$-th), the absolute value of the
partial correlation parameter can be enhanced to exceed that of the
correlation parameter. In such a situation, the edge between the nodes
$i$ and $j$ in our learnt graphical model may show up (within our
defined 95$\%$ HPD credible region on edge probabilities, i.e. at
probability higher than 0.05), though the empirical correlation
between these variables is computed as low \ctp{sheskin}. So, to
summarise, if the empirical correlation between two variables reported
for a data set is high, our learnt graphical model should include an
edge between the two nodes. But the presence of an edge between pair
of nodes is not necessarily an indication of high empirical
correlation between a pair of variables--as in cases where suppressor
variables are involved. Guessing the effect of such suppressor
variables via an examination of the scatterplots is difficult in this
multivariate situation. Lastly, it is appreciated that empirical trends are
only indicators as to the matrix-Normal density-based model of the learnt correlation structure (and the graphical model learnt thereby) given the data at hand. 

Effect on the ``quality''
variable in the ``Exploratory Data Analysis'' reported in PSU site, using the
white wine data, is examined via a linear regression analysis of the predictors $X_1,\ldots,X_{11}$ on the
response variable ``quality'', which suggests the variables
alcohol and volatile acidity to have maximal effect on quality. 
Indeed, this is corroborated in our learning
of the graphical model that manifests edges between the nodes corresponding to
variables: alcohol-quality, and volatile acidity-quality.

\begin{figure}[!t]
{
\hspace*{0in}
\includegraphics[width=10cm,height=8cm]{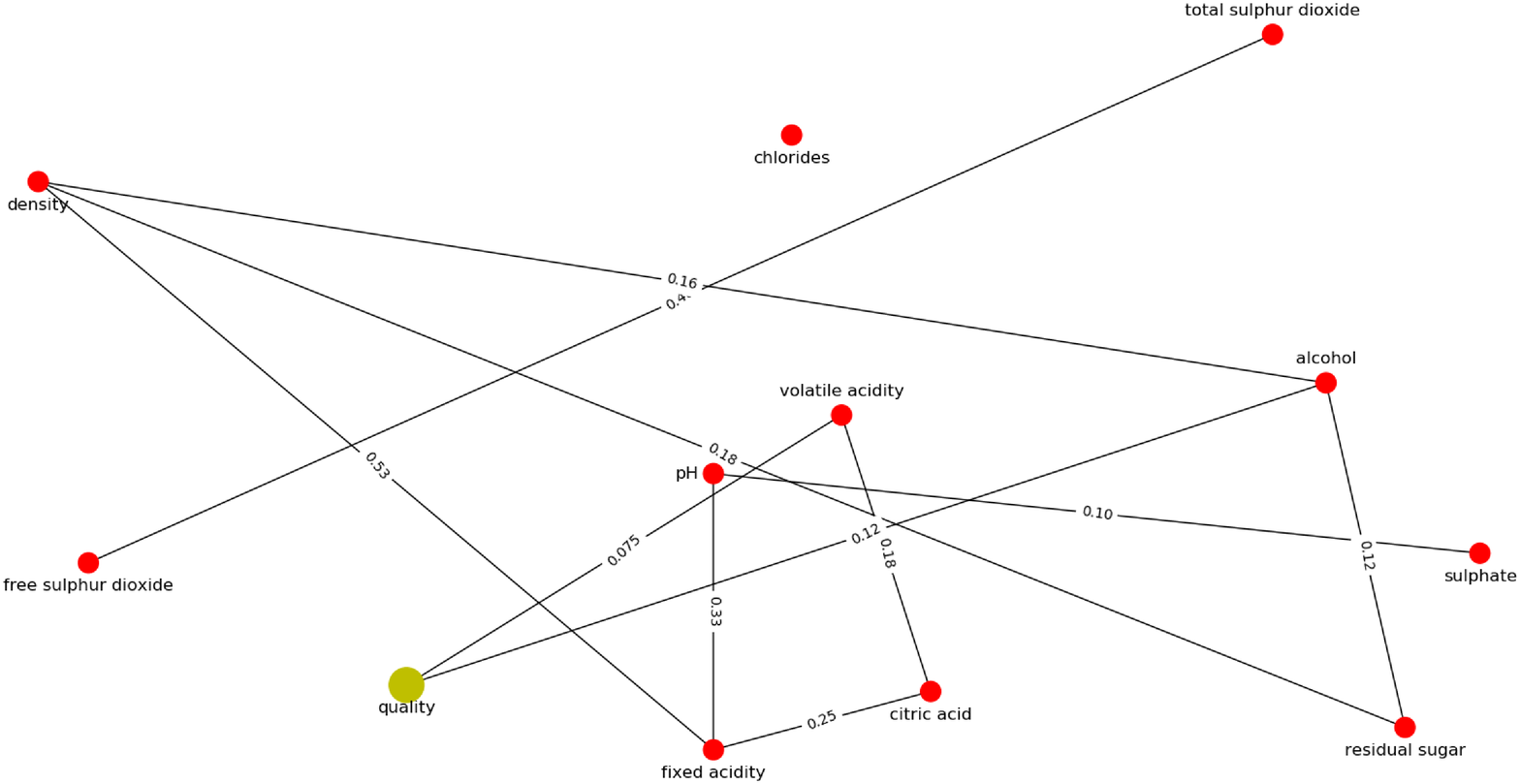}
}
\vspace*{-.05in}
\caption{Graphical model of standardised version ${\bf D}_S^{(red)}$ of the
  real data on Portuguese red wine samples
  \ctp{CorCer09}. Figure is similar to Figure~\ref{fig:white_graph}, even in
  the random placement of the nodes, except that this is the graphical model learnt for the red wine data.}
\label{fig:red_graph}
\vspace*{-.2in}
\end{figure}

\subsection{Results given data ${\bf D}^{(red)}_S$}
\label{sec:red}
\noindent
The ${\bf D}^{(red)}_S$ data is the standardised version of a subset of the
original red wine data set ${\bf D}^{(red)}_{orig}$. ${\bf D}^{(red)}_S$
comprises $n=300$ rows and $p=12$. 
The marginal posterior of some of the partial correlation parameters
$\rho_{ij}$ computed using the elements of the correlation matrix
$\bSigma^{(red)}_S$ (of data ${\bf D}^{(red)}_S$) that is updated in the first
block of Metropolis-with-2-block-update, are presented in Figure~10 of the
Supplementary Material.
Figure~11 of the Supplementary Material presents the trace of the joint
posterior probability of the edge ($G_{ij}$) parameters and the variance
($\upsilon_{ij}$) parameters learnt given data ${\bf
  D}^{(red)}_S$. The inferred graphical model of the
red wine data is included in Figure~\ref{fig:red_graph}.
\subsubsection{Comparing against empirical work done with red wine data}
\noindent   
To the best of our knowledge, analysis of the red wine data has not been
reported in the literature. In lieu of that, we undertake construction of a
matrix of scatterplots of $X_j$ against $X_i$ from the red wine data. This is
shown in Figure~12 of the Supplementary Material, for $i=1,\ldots,11$. These
scatterplots visually indicate moderate correlations between the following
pairs of variables: fixed acidity-citric acid, fixed acidity-density, fixed
acidity-pH, volatile acidity-citric acid, free sulphur dioxide-total sulphur
dioxide, density-alcohol. All these variables share an edge at probability
$\geq 0.05$ in our learnt graphical model of data ${\bf D}_S^{(red)}$ (see
Figure~\ref{fig:red_graph}). We note that all moderately correlated variable
pairs, as represented in these scatterplots, are joined by edges in our learnt
graphical model of the red wine data -- as is to be expected if the learning
of the graphical model is correct. Such pairs include fixed acidity-citric
acid, fixed acidity-density, fixed acidity-pH, volatile acidity-citric acid,
free sulphur dioxide-total sulphur dioxide, density-alcohol. However, an edge
may exist between a pair of variables even when the apparent empirical
correlation between these variables is low, owing to the effect of other
variables (discussed in Section~\ref{sec:comparewhite}).

Noticing such edges from the residual-sugar variable, we undertake a
regression analysis (ordinary least squares) with residual-sugar regressed
against the other remaining 10 vino-chemical variables. The MATLAB output of
that analysis carried out using the red wine data, is included in Figure~13 of
the Supplementary Material. The analysis indicates that the covariates with
maximal (near-equal) effect on residual-sugar, are density and alcohol;
indeed, in our learnt graphical model of the red wine data
(Figure~\ref{fig:red_graph}), residual-sugar is noted to enjoy an edge with
both density ($Z_7$) and alcohol ($Z_{10}$)

We also undertook a separate ordinary least squares analysis with the response
variable quality, regressed against the vino-chemical variables as the
covariates. The MATLAB output of this regression analysis in in Figure~14 of
the Supplementary Section. We notice that the strongest (and nearly-equal) effect on quality is from the variables volatile-acidity and alcohol--the very two variables that share an edge at probability $\geq 0.05$ with quality, in our learnt graphical model of the red wine data.

\section{Metric measuring distance between posterior probability densities of graphs given white and red wine datasets}
\label{sec:real_hell}
\noindent
We seek the distance $\delta(\cdot,\cdot)$ that we defined in
Definition~\ref{defn:1}, between the learnt
red and white wine graphs, using the method delineated in
Section~\ref{sec:hell}. For this, we first compute 
the normalisation\\ $S:={\max}\{(\ln(p_{red}^{(0)}),
\ln(p_{red}^{(1)}),\ldots,\ln(p_{red}^{(N_{iter})}),
\ln(p_{white}^{(0)}),\ldots,\ln(p_{red}^{(N_{iter})})\}$, which for the red
and white wine datasets yields
$s=\ln(p_{red}^{(1474)})\approx 142.7687$. We then use $\exp(\ln(u_m^{(t)})/s)$ in Equation~\ref{eqn:sq_hell}; $m=white, red$. 
Then scaling the log posterior given either data set, at
any iteration, by the global scale value of $s$=142.7687 approximately, we get
$D_H(u_{white},u_{red})\approx 0.1153$, so that the logarithm of this
value of the Hellinger distance between the 2 learnt graphical models is $\ln(0.1153)\approx -2.1602$.
Similarly, using the same scale, the Bhattacharyya distance is
$D_B(p_{white},p_{red})\approx -1.7623$, where we recall that this
measure is a logarithm of the distance.

\begin{figure}[!hbt]
{
\hspace*{0in}
\includegraphics[width=12cm]{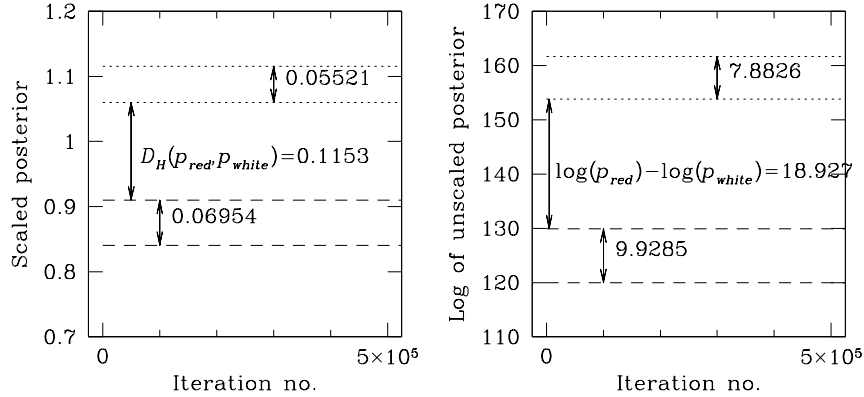}
}
\vspace*{-.0in}
\caption{{\it Left}: minimum and maximum values of the scaled posterior
  probability density of the SRGG sampled in an iteration in the MCMC chain
  run with the red wine data, plotted in dotted lines against the number of
  the iteration. The difference between these values is depicted within the
  band delineated by these lines. The broken lines show the same for the
  results obtained from the MCMC chain run using the white wine data. The
  value of the Hellinger distance computed using the
  scaled posterior probabilities of the graphical models given the two wine
  data sets, is also marked, as about 0.1153. All log posterior values are
  scaled by a chosen global scale (of about 143), and exponentiated (as discussed in the
  text). {\it Right}: similar to the left panel, except that here, the ratio of
  the logarithm of the unscaled posteriors is used; the value of the log odds
  between the posteriors of the red and white wine data sets is marked to be
  about 18.927.}
\label{fig:dist}
\end{figure}
For this $s$ and the red wine data, we compute the uncertainty inherent in
graphical model of the red-wine data as $D_{max,s}(red)$,
between the graph that occurs at maximal posterior and that at the minimal
posterior (Equation~\ref{eqn:graph_new}). Similarly, we compute
$D_{max,s}(white)$. We then compute ratio of the Hellinger distance between the
graphical models learnt given the red and white-wine data, to the uncertainty
inherent in each learnt model, and compare
${D_H(p_{white},p_{red})}/D_{max,s}(red)$, with
${D_H(p_{white},p_{red})}/D_{max,s}(white)$. This comparison is
depicted in the left panel of Figure~\ref{fig:dist} that shows that the
difference $D_{max,s}(white)$ between the scaled posterior of graphs given the
white wine data is about 0.0694 while $D_{max,s}(red)$ given the red wine data
is about 0.05521, These values are compared to the Hellinger distance (between
scaled posteriors) of about 0.1153, between graphs given the red and white
wine data.  Thus, $D_H(u_{red},u_{white})$ is about 1.66$D_{max,s}(white)$ and
about 2.1$D_{max,s}(red)$. 
Thus, our inter-graph distance metric, between the graphical models learnt
given the two data sets is $$\delta(white, red)\approx 0.44$$. 
Then intuitively speaking, this inter-graph distance between the graphical models given the red and white wine
datasets, may suggest independence of the data sets. 

Again, using the correlation model suggested in Proposition~\ref{prop:2}, the
absolute value of the correlation between the 12-dimensional vino-chemical
vector-valued measurable for the red wine data and that for the white wine
data, is $$\vert corr(white, red)\vert := \exp[-\delta(white, red)]\approx
0.1030,$$ which is a low correlation, indicating that the two graphical models
learnt given the real red and white wine Portuguese datasets, are not sampled
from the same {\it{pdf}}.

Compared to these, the sample mean of the log odds of the posterior of the
graphs generated in the post-burnin iterations, given the two data is 18.9273,
which is about 1.9 times the maximal difference between the log posterior
values of graphs achieved in the MCMC run with the white wine data, and about
2.4 times that for the red wine data (see Figure~\ref{fig:dist}). Again,
this suggests that the log odds as a measure of divergence between the
graphical models given these two wine data sets, is significantly higher than
the uncertainty internal to the results for each data.

This clarifies how our pursuit of uncertainties in learnt graphical models,
and inter-graph distance, share an integrated umbrage of purpose, where the
former leads to the latter.

\section{Learning the human disease-symptom network}
\label{sec:disease}
\noindent
Our methodology for learning the graphical model, can be
implemented even for a highly multivariate data that generates a
graph with a very large number of nodes. In this section, we discuss such a
graph (with $\gtrsim$8000 nodes) that describes the correlation structure of
the human disease-symptom network. 
\begin{figure}[!hbt]
\hspace*{-1.0cm}{
\includegraphics[width=15cm,height=12cm]{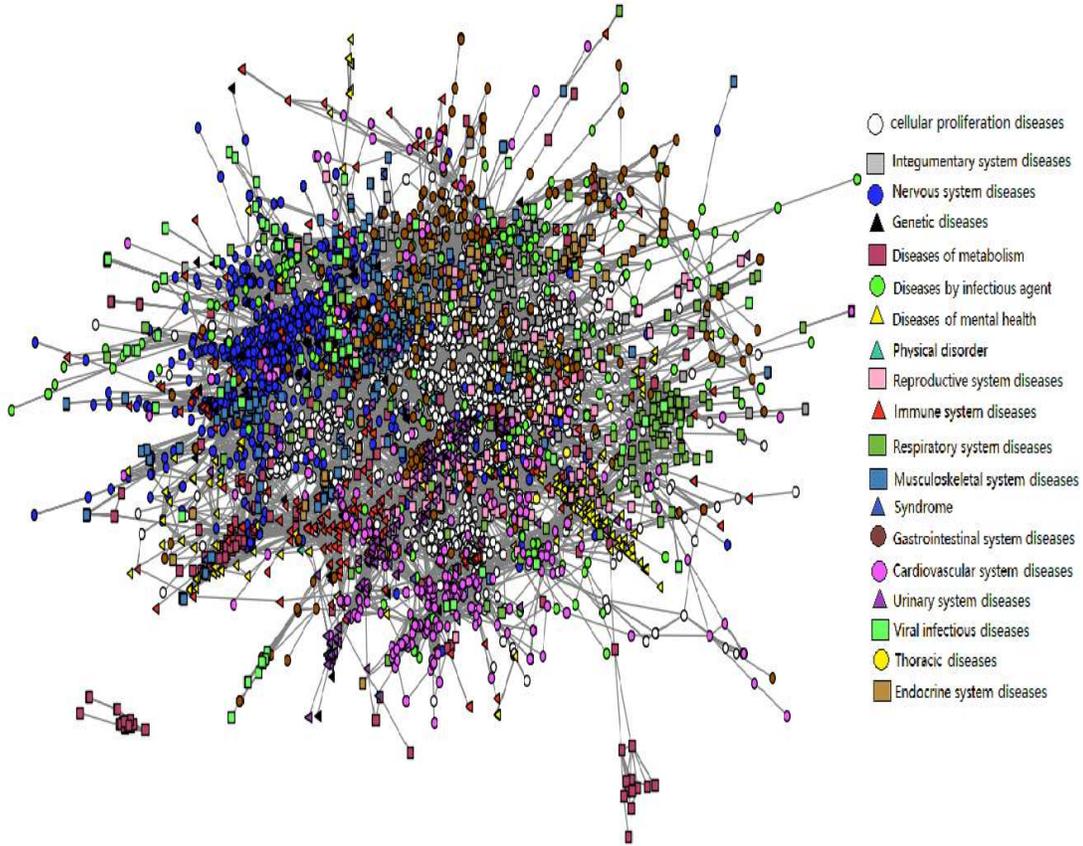}}
\vspace{-.5cm}
\caption{SRGG ${\cal G}_{m, {\bf D}_{DPh}}(\bV,0.1)$, representing the human
  disease-phenotype network that we learn using the disease-disease partial
  correlation obtained using the computed Spearman rank correlation between
  the rank vectors of a list of phenotypes, where the phenotype ranking
  reflects semantic relevance of a phenotype to the disease in question
  (quantified by HSG as the NPMI parameter in the ${\bf D}_{DPh}$ dataset). In
  our learnt SRGG, $\tau=0.1$, i.e. only edges (between the $i$-th and $j$th
  diseases bearing a Spearman rank correlation of ${\mathpzc{s}}_{ij}^{(rank)}$), with marginal posterior
  $m(G_{ij}\vert {\mathpzc{S}}_{ij}^{(rank)})\geq 0.9$ are included in this graph. Here cardinality
  of vertex set $\bV$ is 8676, but all nodes with no edges are discarded from
  this visualised graph, resulting in 6052 diseases (nodes) and 145210 edges remaining
  that are shown this figure. Diseases identified by HSG, to belong to one of
  the 19 given disease class, are presented above in the same colour; the
  colour key identifying these classes, is attached. To draw the graph, we
  used a Python-based code that implements the Fruchterman-Reingold
  force-directed algorithm.}
\label{fig:disgraph}
\end{figure}

\ctn{hsg} (HSG hereon) learn this network by considering the similarity
parameter for each pair of diseases that are elements of an identified set of
diseases in the Human Disease Ontology (DO), that contains information about
rare and common diseases, and spans heritable, developmental, infectious and
environmental diseases. Here, the ``similarity parameter'' between one disease
and another, is computed using the ranked vectors of ``normalised pointwise
mutual information'' (NMPI) parameters for the two diseases, where the NMPI
parameter describes the relevance of a symptom (or rather, a phenotype), to
the disease in question. HSG define the NMPI parameter semantically, as the
normalised number of co-occurrences of a given phenotype and a disease in the
titles and abstracts of 5 million articles in Medline. To do this, they make
use of the Aber-OWL: Pubmed infrastructure that performs such semantical
mining of the Medline abstracts and titles. The disease-disease pairwise
semantic similarity parameters -- computed using the degree of overlap in the
relevance ranks of phenotypes associated with each disease -- result in a
similarity matrix, which HSG turn into a disease–disease network based on
phenotypes. To do this, they only choose from the top-ranking 0.5$\%$ of
disease–disease similarity values. Phenotypes associated with diseases, and
corresponding scoring functions (such as the NPMI), exist in the file
``doid2hpo-fulltext.txt.gz'' at
\url{http://aber-owl.net/aber-owl/diseasephenotypes}. In fact,
this file
contains information about $N_{dis}$ diseases, and
the semantic relevance of each of the $N_{pheno}$ phenotypes to each disease,
as quantified by NPMI parameter values, in addition to other scores such as
$t$-scores and $z$-scores. In this file, $N_{dis}$ is 8676 and $N_{pheno}$ is
19323. In the phenotypic similarity network between diseases that HSG report,
diseases are the nodes, and the edge between two nodes exists in this
undirected graph, if the similarity between the nodes (diseases) is in the
highest-ranking 0.5$\%$ of the 38,688,400 similarity values. They remove all
self-loops and nodes with a degree of 0. Their network is
presented in \url{http://aber-owl.net/aber-owl/diseasephenotypes/network/}.
The network analysis was performed using standard softwares and they identify
multiple clusters in their network, with agglomerates of some clusters (of
diseases), found to correspond to known disease-classes. The ``Group
Selector'' function on their visualisation kit, allows for the identification
of 19 such clusters in their disease-disease network, with each cluster
corresponding to a disease-class. 
Total number of nodes over their
identified 19 clusters, is 5059. The number of edges in their network is
reported to be 65,795; average node degree$\approx$26.2. 

We discuss
detailed comparison of our results to HSG's in the following subsection,
including comparison of HSG's and our recovery of
the relative number of nodes i.e. diseases, in each of the 19
disease classes that HSG classify their reported network into, and our
computed ratios of the averaged intra-class to inter-class variance for     
each of the 19 classes, compared to the ROC Area Under Curve       
values reported by HSG for each class.

\begin{figure}[!hbt]
\centering
\includegraphics[width=13cm,height=6cm]{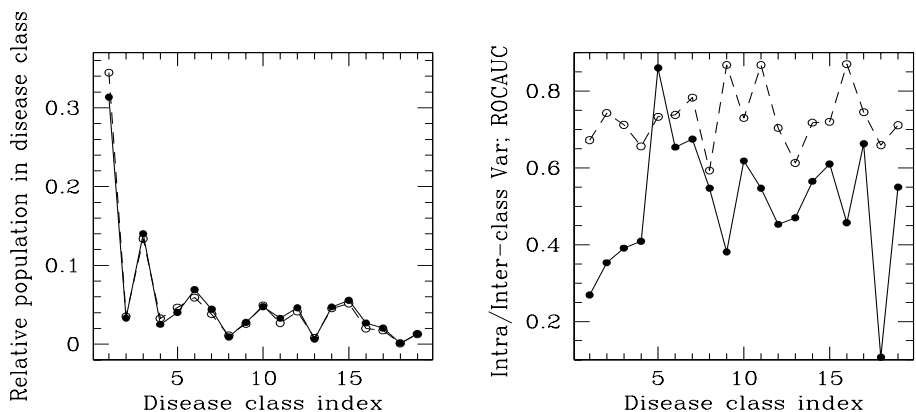}
\vspace{-.5cm}
\caption{{\it Left:} comparison of the
  relative number of nodes (diseases) that we recover in each of the 19
  disease classes that HSG classify their reported network to
  be classified into, with the relative class-membership reported by HSG. Our
  results are shown as filled circles joined by solid lines. In open circles
  threaded by broken lines, we overplot the relative
  number of diseases in each of the 19 classes, as reported by HSG. Similarity of the relative
  populations in the different disease classes, indicate that our learnt
  clustering distribution is similar to that obtained by HSG. {\it Right:}
our computed ratios of the averaged intra-class to inter-class variance for
each of the 19 classes, shown in filled circles; the ROC Area Under Curve
values reported by HSG for each class, is overplotted as open
circles joined by broken lines. The disease class
indices, from assigned values of 1 to 19, are the following respectively:
cellular proliferation diseases, integumentary diseases, diseases of the
nervous system, genetic diseases, diseases of metabolism, diseases by
infectious agents, diseases of mental health, physical disorders, diseases of
the reproductive system, of the immune system, of the respiratory system, of
the muscleoskeletal system, syndromes, gastrointestinal diseases,
cardiovascular diseases, urinary diseases, viral infections, thoracic
diseases, diseases of the endocrine system.}
\label{fig:compgraph}
\end{figure}

HSG's network then manifests a similarity-structure that is computed using 
available NPMI parameter values. Our interest is in learning the
disease-disease network as an SRGG, with each edge of such a graphical model
learnt to exist at a learnt probability $\tau$. We perform such learning using the
NPMI semantic-relevance data that is made available for each of the $N_{dis}$
number of diseases, by HSG; so $N_{dis}$ is the cardinality of the vertex set
$\bV$ of our sought SRGG. We refer to this human
disease-phenotype data as ${\bf D}_{DPh}$. Using ${\bf D}_{DPh}$, we first
compute the correlation $S_{ij}$ between the $i$-th and $j$-th diseases in $\bV$, for each of
which, information on the ranked (semantic) relevance of each of the
$N_{pheno}$ phenotypes exist, in this given dataset. Upon computation of
pairwise correlations, the SRGG for the data ${\bf D}_{DPh}$
is learnt.

We compute the correlation between the $i$-th and $j$-th
diseases in the ${\bf D}_{DPh}$ data, ($i,j=1,\ldots,N_{dis}$, $i\neq j$), in
the following way. We rank the NPMI parameter values for the $i$-th
disease and each of the $N_{pheno}$ phenotypes, with the phenotype of the 
highest semantic relevance to the $i$-th disease assigned a rank 1. Let the 
rank vector of phenotypes, by semantic relevance to the $i$-th disease take the value ${\bf{{\mathpzc{r}}_i}}$
and similarly, the rank vector of phenotypes relevant to the $j$-th disease
is ${\bf{{\mathpzc{r}}_j}}$. We compute the Spearman rank correlation ${\mathpzc{s}}_{ij}^{(rank)}$,
of vectors ${\bf{{\mathpzc{r}}_i}}$ and ${\bf{{\mathpzc{r}}_j}}$. Then we
compute this rank correlation ${\mathpzc{s}}_{ij}^{(rank)}$ 
$\forall\:i,j=1,\ldots,N_{dis};\:i\neq j$, between
the $i$-th and $j$-th nodes. The Spearman rank correlation is preferred to the correlation between the vectors of normalised NPMI values, since we intend to correlate the $i$-th disease with the $j$-th disease, depending on how relevant a given list of phenotypes is, to each disease, i.e. depending on the ranked relevance of the phenotypes. 
We learn the network given this correlation structure, 
that is itself computed using data ${\bf D}_{DPh}$ (see
Section~\ref{sec:net} on learning large networks).

\begin{definition}
  Our visualised SRGG in Figure~\ref{fig:disgraph} is a sub-graph of the full graph
  ${\cal G}_{m, {\bf D}_{DPh}}(\bV, 0.1)$ where $\bV$ has cardinality
  $N_{dis}$, and 
the inter-column
  correlation matrix of data ${\bf D}_{DPh}$ is $\bSigma_C^{(S)}=[{\mathpzc{s}}_{ij}^{(rank)}], \:i\neq j, \;
  i,j=1,\ldots,N_{dis}$, such that this visualised graph is defined to
  consist only of nodes with non-zero degree. This visualised graph has 6052 number of nodes
  (diseases) and 145210 edges, so that the average node degree is
  about 24. This undirected SRGG represents our learning of the human disease
  phenotype graph (displayed in Figure~\ref{fig:disgraph}).
\end{definition}  

\subsection{Comparing our results to the earlier work done on the human disease-symptom network}
\noindent
The ``Group
Selector'' function on the visualisation kit that HSG use, allows for the identification
of 19 such clusters in their disease-disease network, with each cluster
corresponding to a disease-class. 
This function also allows identification of
the number of diseases (i.e. nodes) in each disease-class (see left panel of
Figure~\ref{fig:compgraph}). The right
panel of Figure~\ref{fig:compgraph} displays the ratio of intra-class variance 
to the inter-class variance of each disease-class; the value of the area under
the Receiver Operating Characteristic curve (ROCAUC) for each cluster is
overplotted, where the ROCAUC value for the $i$-th cluster can be
interpreted as probability that a randomly chosen node is ranked as more
likely to be in the $i$-th class than in the $j$-th class; $i\neq j;\:
i,j=1,\ldots,19$ \ctp{caspian}.

\section{Conclusion}
\noindent
In this work, we present a methodology for Bayesianly learning a Soft Random
Geometric Graph that is drawn in a probabilistic metric space, allowing for
the connection function of this SRGG to be equated to the marginal posterior
of the graph edge parameter, given the correlation between the points that
this edge connects, with the threshold radius on this SRGG to be rendered a
probability, s.t. only edges with marginals that exceed such a threshold
(probability $\tau\in[0,1]$) are included in the graph. We demonstrate the
SRGG as generated from a point process that we identify as a non-homogenous
Poisson process, with intensity that varies with the node. 

In fact, correlation between each pair of nodes is learnt as well, and the
SRGG updated at each update of the correlation matrix, within each iteration
of the iterative inference scheme that we employ; (to be precise, the
MCMC-based inference). Here, each of the $p$ nodes of the graph is a variable
$X_i$ -- $n$ measurements of each of which -- comprises the dataset, the
standardised version of which we learn the graphical model and the correlation
matrix of. To be precise, the $i$-th column of the dataset contains the $n$
measurements of the r.v. $X_i$, standardised by its sample mean and standard
deviation; $i=1,\ldots,p$. The vertex set of the sought SRGG is then
$\bV=\{X_1,\ldots,X_p\}$. The continuous-valued generative process of the
inter-column correlation matrix, is identified after we achieve closed-form
marginalisation of the joint likelihood of the inter-column and inter-row
correlation matrices given the dataset, over all possible inter-row
correlations. The resulting process underlying the inter-column corelation, is
then compounded with the non-homogeneous Poisson point process, to generate
the SRGG.
The graphical model of the data is identified with 95$\%$ HPDs, on vertex set
$\bV$, to be the graph with edges, the expected marginals of which exceed
5$\%$, where a sample estimate of the expected marginal of an edge is provided
by its relative frequency from across the sample of SRGGs that are realised
across the iterations of the undertaken inference. When learning a large
network, such an iterative inferential scheme is prohibitively expensive
though. So then we learn the inter-column correlation of the given dataset
empirically, and employ it to learn the SRGG that represents this network.

Our Bayesian learning approach allows for acknowledgement of measurement
errors of any observable. The effect of ignoring such existent
measurement errors, on the graphical model, is demonstrated using a simple,
low-dimensional simulated dataset (see Section~4.2 of the Supplementary
Material; compare Figures~4 and 6 of the Supplementary Materials). Even in such a low-dimensional example, the difference made to the
inferred graph of the given data, by the inclusion of measurement errors, is
clear.

Ultimately we aim at
computing the distance between a pair of such learnt graphical models, of
respective datasets. To compute this inter-graph
distance, we advance a new metric that is given by the difference between the
Hellinger distance between the posterior probabilities of the
graphs, normalised by the uncertainty in one of the learnt graphs, and the
Hellinger distance normalised by the uncertainty in the other learnt graph. 

This novel, eventual computation of the inter-graph
distance is important in the sense that it informs on the
correlation structure of a dataset that is higher-dimensional than being
rectangularly-shaped, such as a cuboidally-shaped dataset that comprises slices of
rectangularly-shaped data slices. Then, the distance between
the graphical models of a pair of such slices of data, will inform on
the correlation between such slices of data. Such information is easily
calculated under the approach discussed herein, even when the datasets are
differently sized, and highly multivariate. An example 
could be a large network observed on a sample of size $n_1$ before an
intervention/treatment, and after implementation of such intervention,
when a smaller sample (of size $n_2$; $n_2\neq n_1$) is investigated. We
illustrate this on computing distance between
the learnt vino-chemical graphical models of
Portuguese red and white wine samples. 

Our learning of large networks is illustrated by the human disease-phenotype network (with $\geq$8,000 nodes).
In this application, learning the inter-node correlation was cast into a
semantic exercise in which we learnt the Spearman rank correlation between 
vectors of associated phenotypes, where any phenotype vector is ranked in order of
relevance to the disease in question. Other situations also admit such possibilities, for example, the
product-to-product, or service-to-service correlation in terms of associated
emotion, (or some other response parameter), can be semantically gleaned from
the corpus of customer reviews uploaded to a chosen internet facility, and the
same used to learn the network of products/services. Importantly, this method
of probabilistic learning of small to large networks, is useful for the
construction of networks that evolve with time, i.e. of dynamic networks.

\renewcommand\baselinestretch{1.}
\small
\bibliographystyle{ECA_jasa}

\end{document}